\newtheorem{definition}{Definition}[section]
\newtheorem{theorem}[definition]{Theorem}
\newtheorem{lemma}[definition]{Lemma}
\newtheorem{corollary}[definition]{Corollary}
\title{Terminating cases of flooding} 
  \author{Walter Hussak \\  Computer Science, \\ Loughborough University, \\ United Kingdom\\ W.Hussak@lboro.ac.uk
   \and Amitabh Trehan \\ Computer Science, \\  Loughborough University, \\ United Kingdom \\ Amitabh.Trehaan@gmail.com}
   \date{}
\begin{document}

\maketitle

\begin{abstract}
Basic synchronous flooding proceeds in rounds. Given a finite undirected (network) graph $G$, a set of sources $I \subseteq G$ initiate flooding in the first round by every node in $I$ sending the same message to all of its neighbours. In each subsequent round, nodes send the message to all of their neighbours from which they did not receive the message in the previous round. Flooding terminates when no node in $G$ sends a message in a round.  The question of termination has not been settled -  rather, non-termination is implicitly assumed to be possible.

We show that flooding terminates on every finite graph. In the case of a single source $g_0$, flooding  terminates in $e$ rounds  if $G$ is bipartite and $j$ rounds with $e < j \leq e+d+1$ otherwise, where $e$ and $d$ are the eccentricity of $g_0$ and diameter of $G$ respectively. For communication/broadcast to all nodes, this is asymptotically time optimal and obviates the need for construction and maintenance of spanning structures. We extend to dynamic flooding initiated in multiple rounds with possibly multiple messages. The cases where a node only sends a message to neighbours from which it did not receive {\it  any} message in the previous round, and where a node sends some highest ranked message to all neighbours from which it did not receive {\it that} message in the previous round, both terminate. All these cases also hold if the network graph loses edges over time. Non-terminating cases include asynchronous flooding, flooding where messages have fixed delays at edges, cases of multiple-message flooding and cases where the network graph acquires edges over time.
\end{abstract}


\section{Introduction}


Flooding is one of the most fundamental of all graph/network algorithms - for example, to achieve \emph{broadcast} in networks~\cite{Attiya-WelchBook,pelegbook}. The algorithm that is usually implemented is: (i) initiator nodes send a message to all their neighbours, (ii) on first receipt, receiving nodes send the message to every neighbour from which they did not receive and (iii) on subsequent receipts, the receiving node does not send the message to any neighbour. This algorithm checks if the message has been received before, to ensure termination. It requires each node records a previous event in its state, making it a \emph{stateful} protocol.  We consider the pure version of flooding, i.e. algorithms that are `amnesiac' in the sense that they hold no record or memory of any event beyond the immediate round. Thus, implemented algorithms that are {\it stateless} are amnesiac. The basic pure flooding algorithm is as follows.
\begin{definition}{\textbf{Synchronous amnesiac flooding algorithm.}}
Let $(G,E)$ be an undirected graph, with vertices $G$ and edges $E$ (representing a network where the vertices represent the nodes of the network and edges represent the connections between the nodes).  Computation proceeds in synchronous `rounds' where each round consists of nodes receiving messages being sent from neighbours. A receiving node sends messages to some neighbours in the next round. No messages are lost in transit. The algorithm is defined by the following conditions.
\begin{itemize}
\item[(i)]
All  nodes from a subset of sources or `initial nodes' $I \subseteq G$  send a message $M$ to all of their neighbours in round 1. 
\item[(ii)]
In subsequent rounds, every node that received $M$ from a neighbour in the previous round, sends $M$ to all, and only those, nodes from which it did not receive $M$. Flooding \emph{terminates} when $M$ is no longer sent to any node in the network.
\end{itemize}
\end{definition}

Hitherto, it was not known whether a stateless flooding algorithm exists that terminates on every graph. In this paper, we show that basic amnesiac flooding is itself such an algorithm. The termination times for a single source $g_0$ are $e$ rounds  if $G$ is bipartite and $j$ rounds with $e < j \leq e+d+1$ otherwise, where $e$ and $d$ are the eccentricity of $g_0$ and diameter of $G$ respectively. These times are asymptotically time optimal. The difference in times between bipartite and non-bipartite graphs means that a stateless algorithm can approximate the topology or diameter/eccentricities of a network in certain cases.  
 
 Gopal, Gopal and Kutten~\cite{GopalGK99} highlighted the memory requirement inherent in stateful flooding and its overhead on fast networks. For example, when multiple messages are being flooded, nodes need to keep a record of the messages they have seen and search through their data structure whenever a message arrives before deciding if it can be further flooded. This is also a limitation for low memory devices, e.g. sensor networks. We show that not only amnesiac flooding of the same message in a single round,  but also amnesiac flooding of multiple messages initiated in multiple rounds, where at most one message is sent along any given edge each round, terminates.
 
 One advantage of  the stateful variant of flooding is that it can be used to construct spanning structures, such as spanning trees, which can be subsequently re-used for efficient communication~\cite{Lynchbook, Attiya-WelchBook, pelegbook,GerardTelDistributedAlgosBook}. It is not easy to maintain these structures in dynamic conditions, i.e. when edges/nodes are being added or removed. We show that even when edges are being arbitrarily or adversarially removed from the graph, amnesiac flooding with the correct forwarding rules will still terminate. Under some other forwarding rules, termination is not guaranteed. Termination is also not guaranteed in asynchronous models where the delivery time of messages on edges can vary - we show this with a non-terminating delivery schedule over the edges and even with a single edge with delayed delivery in some pathological cases.  Note that amnesiac flooding uses rules that avoid forwarding to the most recently chosen nodes - similar ideas have been used before in distributed computing, e.g. in social networks~\cite{Doerr-Social-EuroComb11} and broadcasting~\cite{Elsasser-Memory-SODA08}.

There are numerous applications of flooding as a distributed algorithm. It is often used to set up spanning trees\cite{Lynchbook, Attiya-WelchBook, pelegbook, GerardTelDistributedAlgosBook} and solving \emph{leader election}. For example, the time-optimal leader election of Peleg~\cite{PelegLE-JPDC-90} utilises continuous flooding of node $ID$s to determine the node with the highest $ID$ while using a condition on the node $ID$s to achieve and detect termination. Kutten et al~\cite{KuttenPP0T-JACM-LE-15}, on the other hand, use controlled stateful flooding and resulting spanning tree for efficient time and message leader election. Processes such as random walks~\cite{drw1,drw2,mihail-p2p,KuttenPP0T-JACM-LE-15,KuttenSublinear-TCS-2015} and its deterministic variant Rotor-Router (or Propp) machine~\cite{KosowskiP19,cooper_spencer_2006} can also be seen as restricted variants of flooding. Finally, in some models such as population protocols, the low memory makes termination very difficult to achieve leading to research that tries to provide termination e.g.~\cite{Michail-SpirakisPP2015}.

Flooding-based algorithms (or flooding protocols) appear in areas ranging from GPUs, high performance, shared memory and parallel computing, to mobile ad-hoc networks (MANETs), Mesh Networks, Complex Networks etc~\cite{tanenbaum2011computer}. Rahman et al~\cite{Rahman04controlledflooding} show that  flooding can even be adopted as a reliable and efficient routing scheme, comparable to  sophisticated point-to-point forwarding schemes, in some ad-hoc wireless mobile network applications. Adamek et al~\cite{AdamekNRT-SRDS17} and Karp and Kung~\cite{Karp-GPSR-Mobicom2000} further offer stateless (flooding and other) solutions for routing.

 
 

This paper is structured as follows. In section 2, we prove that, for the simple case of a single message being flooded from a set of initial nodes in round $0$, all nodes receive the message at most twice and hence amnesiac flooding terminates (Theorem 2.2). This result is used to derive sharp bounds for the number of rounds it takes for flooding to terminate, in section 3 (Theorem 3.9). Amnesiac flooding is extended in section 4 to dynamic cases where multiple messages are flooded in any round. The flooding algorithms are categorized by whether a node sends a chosen received message $M_h$ to neighbours from which it did not receive any message in the previous round, or whether the node sends $M_h$ to all neighbours from which it did not receive $M_h$.  We show that, in the first case, and also the second case if messages are ranked, no flooded message is received more than twice and therefore flooding terminates if the number of messages is finite (Theorems 4.4 and 4.8). In section 5 we present  non-terminating cases of amnesiac flooding for both single-message and multiple-message flooding. Conclusions are drawn in section 6.

Sections 2 and 3 of this paper have essentially appeared earlier as conference publications at STACS 2020~\cite{HussakT-STACS20} and PODC 2019~\cite{HussakT19}, and on Arxiv~\cite{HussakTArxiv19}. The proof of time to termination was presented for the case of a single source, but the same proof also provides the time calculation for the multi-source case simply by replacing distances $d(g_0,g)$ of a node $g$ from the initial node $g_0$  by distances $d(I,g)$ of $g$ from the set of initial nodes $I$. Since those publications, Turau \cite{turau1,Turau-Stateless-SIROCCO20} has worked on finding optimal sets of sources of a given size that minimize termination time for multi-source flooding.  The parts of this paper from section 4 onwards on dynamic flooding, which require the use of a more powerful proof method, have not been published previously.

\section{Termination}
Throughout this paper, when we refer to `flooding' we will mean either the amnesiac flooding of Definition 1.1 or another amnesiac version of flooding defined in context in the relevant section. In this section we prove that (amnesiac) flooding initiated from any set of initial nodes, all at once, terminates.
\begin{definition}Let $G$ be the graph and $I \subseteq G$ a set of initial nodes. The {\it round-sets} $R_0, R_1, \ldots $ are defined as:
\[
\begin{array}{lll}
R_0 & \hbox{\it is the set $I$ containing the initial nodes,} & \; \\
R_i & \hbox{\it is the set of nodes which receive a message in round i} & (i\geq 1). \\
\end{array}
\]
\end{definition}Clearly, if $R_j = \emptyset$ for some $j \geq 0$, then $R_i = \emptyset$ for all $i \geq j$.
\begin{theorem}
Any node $g \in G$ is contained in at most two distinct round-sets.
\end{theorem}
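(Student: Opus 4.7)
My plan is to proceed by strong induction on the round number $n$, with inductive hypothesis that every node is in at most two of $R_0, R_1, \ldots, R_n$. The base case $n=0$ is immediate since $R_0 = I$.

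The key auxiliary observation, which I would establish first, is a neighbor-locality lemma derived immediately from the forwarding rule: if $g \in R_i$ with $i \geq 1$, then every neighbor $h$ of $g$ lies in $R_{i-1} \cup R_{i+1}$. Indeed, either $h$ sent to $g$ in round $i$ (placing $h \in R_{i-1}$), or it did not, in which case $g$ sends to $h$ in round $i+1$ (placing $h \in R_{i+1}$); for $g \in R_0 = I$, all neighbors lie in $R_1$. Equivalently, $N(g)$ partitions into the receive-set $R_i(g) \subseteq R_{i-1}$ and the send-set $T_{i+1}(g) \subseteq R_{i+1}$.

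For the inductive step, I would suppose for contradiction that some $g$ lies in three distinct round-sets $R_i, R_j, R_{n+1}$ with $i < j \leq n$. Applying the neighbor-locality lemma at each of the three rounds simultaneously constrains every neighbor $h$ of $g$ to lie in the intersection $(R_{i-1} \cup R_{i+1}) \cap (R_{j-1} \cup R_{j+1}) \cap (R_n \cup R_{n+2})$. The plan is to argue by case analysis on the relative positions of $i, j, n$ that in every configuration either some neighbor is forced into three distinct prior round-sets (contradicting the inductive hypothesis applied to that neighbor), or the forwarding rule is itself violated because no neighbor of $g$ can coherently play the role demanded by its three memberships in round-sets around $i$, $j$, and $n$.

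The main obstacle is precisely this case analysis: the degenerate configuration $j = i+2$ (where $R_{i+1}$ and $R_{j-1}$ coincide) and the boundary configurations $j=n$ or $j=n-1$ (where the indices $j+1$, $n$, $n+2$ can overlap or exceed the inductive range) are the trickiest. To handle them cleanly I would likely strengthen the inductive hypothesis to additionally assert that whenever $g \in R_i \cap R_j$ the receive-sets $R_i(g)$ and $R_j(g)$ partition $N(g)$, that is, $R_i(g) \cup R_j(g) = N(g)$ and $R_i(g) \cap R_j(g) = \emptyset$. This partition invariant directly forbids a third activation: every neighbor of $g$ is then already committed to having sent to $g$ in exactly one of the two activation rounds, and the forwarding rule leaves no neighbor available to transmit to $g$ a third time without contradicting the strengthened invariant at that neighbor itself.
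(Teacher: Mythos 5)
Your strategy (strong induction on the round index, plus a local neighbour lemma) is genuinely different from the paper's proof, but it contains a concrete gap: the strengthened invariant you propose to carry through the induction is false. Take $K_4$ minus an edge: vertices $a,b,c,d$ with every edge present except $\{a,d\}$, and $I=\{a\}$. Then $R_0=\{a\}$, $R_1=\{b,c\}$, $R_2=\{b,c,d\}$, $R_3=\{a,d\}$, $R_4=\emptyset$. Here $d\in R_2\cap R_3$ and $d$ receives from \emph{both} of its neighbours $b$ and $c$ in round $2$ and again in round $3$, so the receive-sets $R_2(d)$ and $R_3(d)$ are equal rather than disjoint; moreover $b\in R_1\cap R_2$ with $R_1(b)\cup R_2(b)=\{a,c\}\neq N(b)$, since $d$ never sends to $b$. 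Both halves of the proposed partition invariant therefore fail already on a four-node graph, so the induction cannot be closed this way.

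Without that strengthening, what remains unproved is precisely the hard part. Your neighbour-locality lemma is correct, but at the third occurrence it only places each neighbour in $R_n\cup R_{n+2}$, and $R_{n+2}$ lies outside the range of any inductive hypothesis indexed by rounds; furthermore, when $i$, $j$ and $n+1$ are close together the pairs $R_{i\pm1}$, $R_{j\pm1}$, $R_{n+1\pm1}$ overlap, so no neighbour need be forced into three distinct round-sets at all. The paper avoids both problems by a different mechanism: among any three occurrences of a node, some two are an even number of rounds apart, so it suffices to rule out recurrences of even duration; one then takes a recurrence of minimum even duration and, among those, of earliest start, and shows that the neighbour $g'$ that delivers the final message produces either an equally long even recurrence starting one round earlier or a strictly shorter even recurrence, contradicting minimality either way. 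Some such global parity-plus-extremal idea is the missing ingredient; the local case analysis you sketch does not by itself dispose of the boundary configurations you yourself flag as the main obstacle.
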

\begin{proof}
Define ${\mathcal R}$ to be the set of finite sequences of consecutive round-sets of the form:
\begin{equation}
\underline{R} = 
R_s, \ldots, R_{s+d} \;\;\; \hbox{ {\it where} $s\geq 0, \;d>0$, {\it and} $R_s \cap R_{s+d} \neq \emptyset$ }.
\end{equation}In (1), $s$ is the {\it start-point} $s(\underline{R})$ and $d$ is the {\it duration} $d(\underline{R})$ of $\underline{R}$. Note that, a node $g \in G$ belonging to $R_s$ and $R_{s+d}$  may also belong to other $R_i$ in (1).  If a node $g \in G$ occurs in three different round-sets $R_{i_1}$, $R_{i_2}$ and $R_{i_3}$, then the duration between $R_{i_1}$ and $R_{i_2}$, $R_{i_2}$ and $R_{i_3}$, or $R_{i_1}$ and $R_{i_3}$ will be even. Consider the subset ${\mathcal R}^e$ of ${\mathcal R}$ of sequences of the form (1) where $d$ is even. To prove that no node is in three round-sets, it suffices to prove that ${\mathcal R}^e$ is empty. 

We assume that ${\mathcal R}^e$ is non-empty and derive a contradiction. Let ${\mathcal R}^e_{\hat{d}}$ be the subset of ${\mathcal R}^e$ comprising sequences of minimum (even) duration $\hat{d}$, i.e.
\begin{equation}
{\mathcal R}_{\hat{d}}^e = \{ \underline{R} \in {\mathcal R}^e \;\; | \;\; \forall \; \underline{R}'\in {\mathcal R}^e . \;\; d(\underline{R}') \geq d(\underline{R})=\hat{d} \}
\end{equation}Clearly, if ${\mathcal R}^e$ is non-empty then so is ${\mathcal R}^e_{\hat{d}}$. Let $\underline{R}^* \in {\mathcal R}^e_{\hat{d}}$ be the sequence with earliest start-point $\hat{s}$,  i.e.
\begin{equation}
\underline{R}^* = R_{\hat{s}}, \ldots , R_{\hat{s}+\hat{d}}
\end{equation}where
\begin{equation}
 \forall \; \underline{R}'\in {\mathcal R}_{\hat{d}}^e\;.\; s(\underline{R}') \geq s(\underline{R}^*) = \hat{s} 
\end{equation}By (1), there exists $g \in R_{\hat{s}} \cap R_{\hat{s}+\hat{d}}$. Choose node $g'$ which sends a message to $g$ in round $\hat{s}+\hat{d}$. As $g'$ is a neighbour of $g$, either $g'$ sends a message to $g$ in round $\hat{s}$ or $g$ sends a message to $g'$ in round $\hat{s}+1$. We show that each of these cases leads to a contradiction.

$\;$

\noindent {\it Case (i) $g'$ sends a message to $g$ in round $\hat{s}$}

\noindent In this case,  there must be a round $\hat{s}-1$ which is either round 0 and $g'$ is an initial node in $I$, or $g'$ received a message in round $\hat{s}-1$.  Thus, the sequence 
\begin{equation}
\underline{R}^{*'} = R_{\hat{s}-1},R_{\hat{s}},\ldots, R_{\hat{s}+\hat{d}-1}\;\;\; \hbox{ {\it where}}\; g' \in R_{\hat{s}-1} \cap R_{\hat{s}+\hat{d}-1} 
\end{equation}has $d( \underline{R}^{*'})=(\hat{s}+\hat{d}-1)-(\hat{s}-1) =\hat{d}$ which is even and so $\underline{R}^{*'} \in {\mathcal R}_{\hat{d}}^e$. As $\underline{R}^{*'} \in {\mathcal R}_{\hat{d}}^e$,  by (4)
\begin{equation}
s(\underline{R}^{*'}) \geq s(\underline{R}^*)
\end{equation}But, from (5), $s(\underline{R}^{*'}) = \hat{s}-1$ and, from (4), $ s(\underline{R}^*)=\hat{s}$. Thus, by (6),
\[
\hat{s}-1 = s(\underline{R}^{*'}) \geq s(\underline{R}^*) = \hat{s}
\]which is a contradiction.

$\;$

\noindent {\it Case (ii) $g$ sends a message to $g'$ in round $\hat{s}+1$}

\noindent  By the definition of ${\mathcal R}^e$, the smallest possible value of $\hat{d}$ is 2. However, it is not possible to have $\hat{d}=2$ in this case as then
\[
\underline{R}^* = R_{\hat{s}}, R_{\hat{s}+1}, R_{\hat{s}+2}
\]and $g$ sends a message to $g'$ in round $\hat{s}+1$. As we chose $g'$ to be such that $g'$ sends a message to $g$ in round $\hat{s}+\hat{d}=\hat{s}+2$  this cannot happen because $g$ cannot send a message to $g'$ and $g'$ to $g$ in consecutive rounds by the definition of rounds. So,
\[
\underline{R}^* = R_{\hat{s}}, R_{\hat{s}+1},\ldots, R_{\hat{s}+\hat{d}-1}, R_{\hat{s}+\hat{d}}
\] where $\hat{s}+1<\hat{s}+\hat{d}-1$. Consider the sequence
\begin{equation}
\underline{R}^{*''} =  R_{\hat{s}+1},\ldots, R_{\hat{s}+\hat{d}-1}
\end{equation}As $g'$ receives a message from $g$ in round $\hat{s}+1$ and $g'$ sends a message to $g$ in round $\hat{s}+\hat{d}$, it is clear that $g' \in 
 R_{\hat{s}+1} \cap R_{\hat{s}+\hat{d}-1}$. Thus, $\underline{R}^{*''} \in {\mathcal R}$. As $\hat{d}$ is even, so is $(\hat{s}+\hat{d}-1)-(\hat{s}+1)=\hat{d}-2$ and therefore  $\underline{R}^{*''} \in {\mathcal R}^e$. Now, $\underline{R}^* \in {\mathcal R}^e_{\hat{d}}$ and so, as $\underline{R}^{*''} \in {\mathcal R}^e$, we have,  by (2), 
\begin{equation}
 d(\underline{R}^{*''}) \geq d(\underline{R}^*)
\end{equation}As $d(\underline{R}^{*''})=\hat{d}-2$ from (7) and $ d(\underline{R}^*) = \hat{d}$ from (3), we have, by (8),
\[
\hat{d}-2 =  d(\underline{R}^{*''}) \geq d(\underline{R}^*) = \hat{d}
\]This contradiction completes the proof.
\end{proof}
\begin{definition}
(For section 3 only - superscripts have a different use in section 4.) Given $g \in G$, we use a superscript 1 to indicate that $g$ belongs to a round-set for the first time, and a superscript 2 to indicate that it belongs to a round-set for the second time, i.e.\[
g^1 \in R_j
\]means that 
\[
g \in R_j \;\;\; {\it and} \;\;\;g \notin R_i \;\;\; \hbox{{\it for all i with}} \;\; 0 \leq i < j.
\]and
\[
g^2 \in R_j
\]means that 
\[
g \in R_j \;\;\; {\it and} \;\;\;g \in R_i \;\;\; \hbox{{\it for some i with}} \;\; 0 \leq i < j.
\]
\end{definition}Theorem 2.2 implies that $R_i = \emptyset$ for $i \geq 2|G|$, where $|G|$ is the number of vertices of $G$, and therefore flooding always terminates. In the next section we give sharp bounds for the number of rounds to termination, in terms of the eccentricity of initial nodes and the diameter of $G$.

\section{Time to termination}
The question of termination of network flooding is non-trivial when cycles are present in $G$. The simple cases when $G$ is an even cycle and when $G$ is an odd cycle, as in figure 1 (where an arrow indicates a message received by a node in the given round), display quite different termination behaviours. 
\begin{figure}[h!]
\centering 
\includegraphics[scale=0.75]{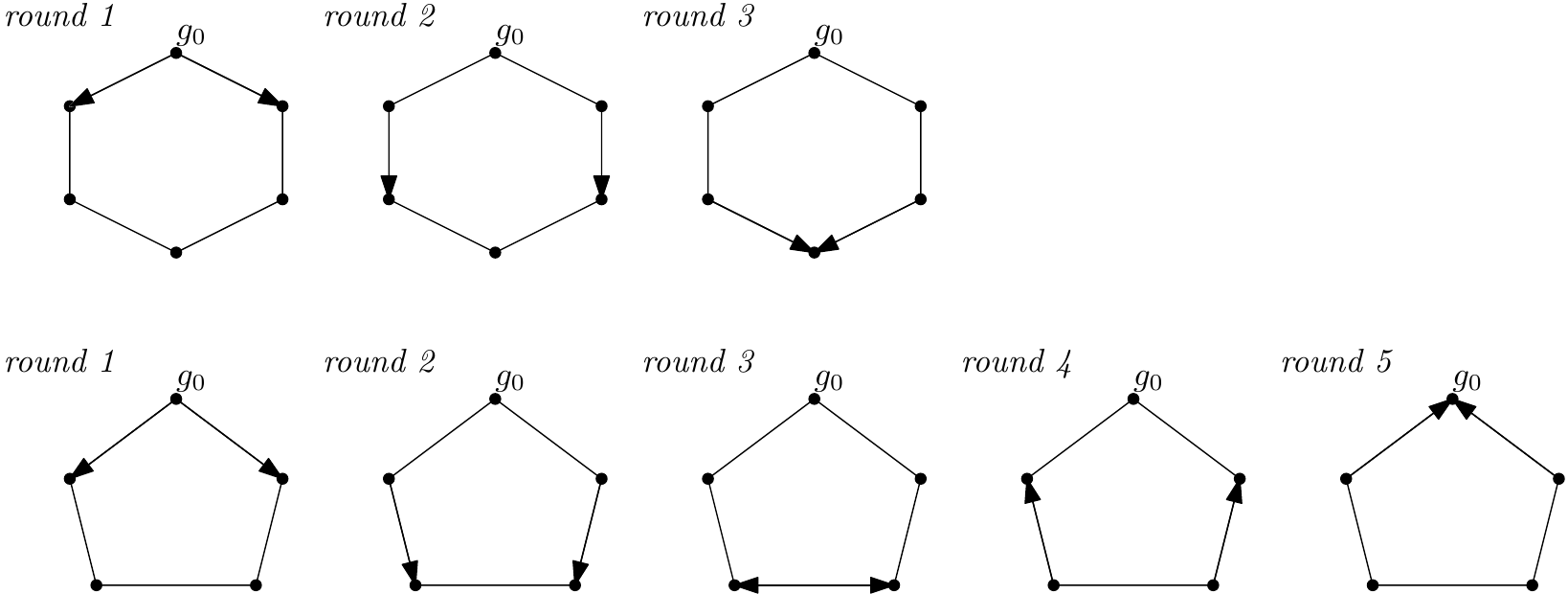} 
\caption{Flooding in even and odd cycles}
\end{figure}The even cycle terminates remotely from the initial node $g_0$, after round $e$ where $e$ is the eccentricity of $g_0$ in $G$. On the other hand, flooding on the odd cycle returns a message to the initial node $g_0$ and terminates after round $2e+1$ resulting in a longer flooding process than the even cycle despite having fewer nodes and a smaller value of $e$. In this section, we show that these observations can be largely generalized to arbitrary graphs. Observe that in the even cycle every node receives the message only once, whereas in the odd cycle every node receives the message twice. We show that, in general graphs, this difference is due to the absence or presence of neighbouring nodes that are equidistant from the set of initial nodes $I$, called `ec nodes' below.
\begin{definition}
Let $(G,E)$ be a graph with vertex set $G$ and edge set $E$ and let $I \subseteq G$ be a set of initial nodes. We will use the following definitions.
\begin{itemize}
\item[(i)]
For each $j  \geq 0$, the distance set $D_j$ will denote the set of points which are a distance $j$ from $I$. i.e.
\[
D_j = \{ g \in G \;   : \; d(I,g) = j \},
\]where, for any $g \in G$,
\[
d(I,g) = {\it min}\{ d(g_0,g) \; : \; g_0 \in I \}
\]and $d$ is the usual distance function in graph $G$.
\item[(ii)]
A node $g \in G $ is an {\it equidistantly-connected node}, abbreviated {\it ec node}, iff it has a neighbour $g'$ equidistant from $I$, i.e.  iff there there exists $g' \in G$ such that $d(I,g)=d(I,g')$ and $\{ g, g' \} \in E$
\item[(iii)]
The {\it eccentricity} of $I$, denoted $e(I)$, is defined as
\[
e(I) = {\it max}\{ d(I,g) \; : \; g \in G \}
\]
\item[(iv)]
The graph $G$ is {\it ec-bipartite} iff $G$ has no ec nodes.
\end{itemize}
\end{definition}\noindent Note that neighbouring initial nodes $g_0,g_0' \in I$ are ec nodes as they are both a distance 0 from $I$. We have the following basic properties of distance sets $D_j$ and $ec$ nodes.
\begin{lemma} Let $G$ be a graph and $I \subseteq G$ be a set of initial nodes.
\begin{itemize}
\item[(i)]
For all $j  \geq 0$ and $i>j$, $D_j \subseteq R_j$ and $R_j \cap D_i = \emptyset$.
\item[(ii)]
For all $j  \geq 0$, $g \in D_j$ and $g' \in D_{j+1}$ such that $g$ and $g'$ are neighbours, $g$ sends a message to $g'$ in round $j+1$, i.e. all nodes at a distance $j$ from $I$ send to all their neighbours which are a distance $j+1$ in round $j+1$.
\item[(iii)]
If $j \geq 0$ and $g \in D_j$ is an $ec$ node, then $g^2 \in R_{j+1}$.
\end{itemize}
\end{lemma}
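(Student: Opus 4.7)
The three parts are tightly coupled: (i) is the foundational claim that rounds and distance sets line up for the "first wave," (ii) exploits (i) to show the wave really does propagate from $D_j$ to $D_{j+1}$, and (iii) combines both to produce a second receipt at $ec$ nodes. My plan is to prove (i) by a joint induction on $j$, then derive (ii) and (iii) as easy corollaries.

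\textbf{Proof of (i).} I would establish the two statements $D_j \subseteq R_j$ and $R_j \cap D_i = \emptyset$ (for all $i > j$) simultaneously by induction on $j$. The base case $j=0$ is immediate: $D_0 = I = R_0$, and no node at positive distance from $I$ is initial. For the inductive step, suppose $g \in D_{j+1}$; by definition $g$ has some neighbour $g' \in D_j$, and by the inductive hypothesis $g' \in R_j$, so $g'$ will send to $g$ in round $j+1$ provided $g$ did not send to $g'$ in round $j$. But a round-$j$ sender must lie in $R_{j-1}$, and by the inductive hypothesis $R_{j-1} \cap D_{j+1} = \emptyset$, so $g \notin R_{j-1}$ and therefore $g'$ does send to $g$ in round $j+1$, giving $g \in R_{j+1}$. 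For the second clause, any $g \in R_{j+1}$ receives from some neighbour $g' \in R_j$, and by the inductive hypothesis $g' \in D_0 \cup \cdots \cup D_j$, which forces $d(I,g) \leq j+1$, i.e. $g \notin D_i$ for $i > j+1$.

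\textbf{Proof of (ii).} Given $g \in D_j$ and neighbour $g' \in D_{j+1}$, part (i) gives $g \in R_j$, so $g$ does send in round $j+1$ to exactly those neighbours from which it did not receive in round $j$. If $g$ had received from $g'$ in round $j$, then $g' \in R_{j-1}$, contradicting $R_{j-1} \cap D_{j+1} = \emptyset$ from (i). (The case $j=0$ is handled directly from Definition 1.1(i): every initial node sends to all of its neighbours in round $1$.) Hence $g$ sends to $g'$ in round $j+1$.

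\textbf{Proof of (iii).} Let $g \in D_j$ be an $ec$ node with equidistant neighbour $g' \in D_j$. By (i), both $g,g' \in R_j$, and by (i) applied to $j-1$, neither $g$ nor $g'$ is in $R_{j-1}$ (for $j \geq 1$; for $j = 0$, both $g, g' \in I$ and the analogue holds trivially via Definition 1.1(i)). Consequently, in round $j$ neither node sent to the other, so each is obliged in round $j+1$ to send along the edge $\{g, g'\}$. Thus $g$ receives from $g'$ in round $j+1$, giving $g \in R_{j+1}$. Since (i) also forces $g \notin R_i$ for $i < j$, the occurrence $g \in R_j$ is the first, so $g \in R_{j+1}$ is the second, i.e.\ $g^2 \in R_{j+1}$.

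The only subtle point is the bookkeeping in the joint induction of (i): one must be careful that the two sub-claims are mutually supporting (the sender analysis for $D_{j+1} \subseteq R_{j+1}$ uses $R_{j-1} \cap D_{j+1} = \emptyset$, and the disjointness clause uses $R_j$'s structure), and that the base case $j=0$ is treated via the initial-sending rule rather than a "previous round." Beyond that, each step is a direct application of the flooding rule.
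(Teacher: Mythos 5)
Your proposal is correct and follows essentially the same route as the paper: part (ii) via the observation that $g$ could only fail to send to $g'$ if $g'$ had sent to $g$ in round $j$, which would put $g'$ in $R_{j-1}\cap D_{j+1}=\emptyset$, and part (iii) via the same "neither sent in round $j$, so both send in round $j+1$" argument with the $j=0$ case split off. The only difference is that you prove the first clause of (i) by a careful strong induction (reusing the disjointness clause at index $j-1$), where the paper simply asserts it as clear; your version is a legitimate and slightly more rigorous filling-in of that step rather than a different approach.
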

\begin{proof}
For (i), clearly every node at a distance $j$ from $I$ receives a message in round $j$ and so $D_j \subseteq R_j$. Furthermore, every message received in round $j$ will have travelled at most $j$ edges away from $I$ and so could not have reached a node which is at a distance $i>j$ from $I$. Thus, $R_j \cap D_i = \emptyset$.

For (ii), we note that the only circumstance in which a node $g$ in $D_j$ ($\subseteq R_j$ by (i)) does not send to a neighbour $g'$ in $D_{j+1}$ in round $j+1$ is if $g$ sent a message to $g'$ in round $j$. This would need $g$ to be in the round-set $R_{j-1}$, i.e. $g \in R_{j-1} \cap D_{j+1}$ which contradicts (i) which has $R_{j-1} \cap D_{j+1} = \emptyset$ as $j+1 > j-1$.

For (iii), if $j \geq 0$ and $g \in D_j$ is an $ec$ node, then by Definition 3.1(ii) there is an point $g'$ equidistant from $I$, i.e. $g' \in D_j$ such that $g$ and $g'$ are neighbours. If $j=0$ then $g,g' \in D_0 (= I \subseteq R_0)$ and send a message to each other in round 1 and so $g,g' \in R_1$. This means that $g^2 \in R_1$, as $g$ is an initial node and $g^1 \in R_0$. If $j \geq 1$ then, by (i) of this lemma, $D_j \subseteq R_j$ and so both $g$ and $g'$ receive messages in round $j$.  Also, by (i), neither sends a message in round $j$ as $R_{j-1} \cap D_j = \emptyset$. Thus, $g$ and $g'$ send messages to each other in round $j+1$. As this will be the second time they receive a message we have that $g^2 \in R_{j+1}$
\end{proof}

$\;$

\noindent All nodes in a graph without $ec$ nodes, belong to at most one round-set.
\begin{lemma}
 Let $G$ be a graph and $I \subseteq G$ be a set of initial nodes. Then $G$ has an ec node if and only if $G$ has a node that is in two round-sets.
\end{lemma}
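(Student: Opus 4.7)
The plan is to prove the two implications separately. The ``only if'' direction is essentially immediate from Lemma 3.2(iii): any ec node $g$ lies in exactly one distance set $D_j$ (namely $j = d(I,g)$), hence $g \in R_j$ by Lemma 3.2(i), while Lemma 3.2(iii) gives $g^2 \in R_{j+1}$, placing $g$ in at least two distinct round-sets.

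For the converse I will prove the contrapositive: if $G$ has no ec nodes, then no node lies in two round-sets. The key claim is the invariant $R_j = D_j$ for every $j \geq 0$, to be proved by induction on $j$. Granted this invariant, since each node has a single well-defined value $d(I,g)$ and so belongs to exactly one $D_j$, no node can lie in two distinct $R_j$. The base case $R_0 = I = D_0$ is immediate from the definitions. For the inductive step, $D_{j+1} \subseteq R_{j+1}$ follows directly from Lemma 3.2(i), so it remains to establish $R_{j+1} \subseteq D_{j+1}$.

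The crux is the case analysis in this reverse inclusion. Take $g \in R_{j+1}$; then $g$ received in round $j+1$ from some neighbour $g' \in R_j$, and by the inductive hypothesis $g' \in D_j$, so $d(I,g) \leq j+1$. Write $k = d(I,g)$. If $k = j+1$ we are done. If $k = j$ then $g$ and $g'$ are adjacent and both lie in $D_j$, witnessing that each is an ec node and contradicting the hypothesis. If $k \leq j-1$ then since $g$ is adjacent to $g' \in D_j$, the triangle inequality forces $k = j-1$; but then Lemma 3.2(ii) applied to $g \in D_{j-1}$ and $g' \in D_j$ says $g$ sends $M$ to $g'$ in round $j$, so by the flooding forwarding rule $g'$ cannot send $M$ back to $g$ in round $j+1$, contradicting the choice of $g'$.

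The main obstacle is this third case: no purely distance-based argument prevents a node in $D_{j-1}$ from appearing again as a second-time receiver in round $j+1$, and the resolution depends on coupling the distance structure from Lemma 3.2(ii) with the ``do not reply to the previous round's sender'' rule built into amnesiac flooding. Once this case is dispatched, the inductive invariant $R_j = D_j$ closes cleanly, and both directions of the biconditional follow.
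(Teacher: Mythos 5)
Your proposal is correct and follows essentially the same route as the paper: the forward direction is the same direct application of Lemma 3.2(i) and (iii), and your inductive invariant $R_j = D_j$ is just a reorganisation of the paper's minimal-counterexample argument, with the identical three-way case split on whether the receiver lies in $D_{j+1}$, $D_j$ (yielding an ec pair), or $D_{j-1}$ (excluded via Lemma 3.2(ii) and the no-reply rule). The only cosmetic difference is that you carry the slightly stronger statement $R_j = D_j$ through an induction rather than extracting the earliest second occurrence.
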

\begin{proof}
Suppose that $G$ has no $ec$ nodes. Assume, on the contrary, that $G$ has nodes that appear in two round-sets.  Let $R_j$ $(j \geq 1)$ be the earliest round which contains a node $g$ such that $g^2 \in R_j$ and $h \in R_{j-1}$ be a neighbour of $g$ which sends to $g$ in round $j$, so that $h^1 \in R_{j-1}$. Then, $h \in D_i$ for some $i \geq 0$ and $h^1 \in R_i$ by Lemma 3.2(i).  Thus, $i=j-1$ and so $g^2 \in R_{i+1}$. As $g$ is a neighbour of $h$, $g \in D_i$, $D_{i+1}$, or $D_{i-1}$. If $g \in D_i$ then $g$ and $h$ are $ec$ nodes contrary to our supposition that $G$ has no $ec$ nodes. If $g \in D_{i+1}$ then $g^1 \in R_{i+1}$ by Lemma 3.2(i), which is contrary to the assertion that $g^2 \in R_j = R_{i+1}$. If $g \in D_{i-1}$ then $g \in R_{i-1}$, by Lemma 3.2(i), and so $g^1 \in R_{i-1}$ as $g^2 \in R_{i+1}$. By Lemma 3.2(ii), $g$ sends to $h$ in round $i=j-1$. This is contrary to $h$ sending to $g$ in round $j$. Thus, our assumption that $G$ has nodes that appear in two round-sets is false.

Conversely, suppose that $G$ has an $ec$ node $g$, $g \in D_j$ ($\subseteq R_j$ by Lemma 3.2(i)) say where $j \geq 1$. Then $g^2 \in R_{j+1}$ by Lemma 3.2(iii).
\end{proof}

$\;$

\noindent In the case of a single initial node, bipartite graphs do not have any $ec$ nodes.
\begin{lemma}
Let $G$ be a graph and suppose that $I$ contains a single initial node $g_0$ . Then, $G$ is bipartite iff it has no $ec$ nodes.
\end{lemma}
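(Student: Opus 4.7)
The plan is to compare the parity classes of the BFS levels $\ell(g) := d(g_0,g)$ to a 2-colouring of $G$. The key observation is that for every edge $\{u,v\}$ the triangle inequality gives $|\ell(u)-\ell(v)| \leq 1$, so either $\ell(u)=\ell(v)$ (which is exactly the ec situation) or $\ell(u)$ and $\ell(v)$ differ by $1$ and therefore have opposite parities. With this dichotomy in hand, both directions reduce to very short arguments.

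For the forward direction (bipartite implies no ec nodes), I would fix a bipartition of $G$ into parts $A$ and $B$ with $g_0 \in A$. A straightforward induction on $\ell(g)$ shows that $g \in A$ iff $\ell(g)$ is even: the base case is clear at $g_0$, and along a shortest path from $g_0$ to $g$ the endpoints alternate sides, so parity is preserved throughout the BFS layers. Adjacent vertices therefore lie in different parts, and hence have $\ell$-values of opposite parity, in particular not equal, ruling out ec nodes.

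For the reverse direction (no ec nodes implies bipartite), I would simply colour each vertex by the parity of $\ell$. Given any edge $\{u,v\}$, the no-ec hypothesis forces $\ell(u) \neq \ell(v)$, while the triangle inequality bounds $|\ell(u)-\ell(v)| \leq 1$; together these give $|\ell(u)-\ell(v)|=1$, so $u$ and $v$ receive opposite colours. Since this holds on every edge, the parity colouring is a proper 2-colouring, so $G$ is bipartite.

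The only subtlety I anticipate concerns vertices unreachable from $g_0$, for which $\ell$ is infinite: any two adjacent such vertices would both be at distance $\infty$ from $I$, and hence qualify as ec, so the no-ec hypothesis forces every component outside the one containing $g_0$ to be edgeless, and the parity colouring extends trivially. Apart from that bookkeeping, every step is a one-line consequence of $|\ell(u)-\ell(v)| \leq 1$ on edges, so I do not expect any serious obstacle.
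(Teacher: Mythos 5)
Your proof is correct and follows essentially the same route as the paper's: both identify the partite sets with the parity classes of the distance $d(g_0,\cdot)$ and use the fact that an edge joins either two equidistant nodes (the ec case) or two nodes on consecutive levels, which then have opposite parity. The paper compresses this into two sentences, while you spell out the induction, the parity $2$-colouring, and the (implicitly excluded) disconnected case.
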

\begin{proof}
It is easy to see that nodes equidistant from the initial node $g_0$ must belong to the same partite set. A graph is bipartite iff no edge connects two such nodes, and this is the case iff $G$ has no $ec$ nodes by Definition 3.1(ii).
\end{proof}

$\;$

\noindent From Lemma 3.3 we see that nodes in graphs without ec nodes only receive a message once. For these graphs the time to termination is the number of rounds taken for the message to reach the most distant points from the set of initial nodes $I$. This is just the eccentricity of $I$, $e(I)$.
\begin{theorem}
Let $G$ be a graph and $I \subseteq G$ be a set of initial nodes with eccentricity $e(I)$. Then, flooding will have terminated after round $e(I)$ if and only if $G$ is ec-bipartite
\end{theorem}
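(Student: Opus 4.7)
The plan is to prove the two directions of the biconditional separately, since the easy direction falls out of the structural lemmas already proved, while the reverse direction needs a propagation argument showing that any ec edge prolongs flooding past round $e(I)$.

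For $(\Leftarrow)$, assume $G$ is ec-bipartite. By Lemma 3.3 no vertex of $G$ belongs to two round-sets, and Lemma 3.2(i) gives $D_j \subseteq R_j$ and $R_j \cap D_i = \emptyset$ whenever $i > j$. Together these force each vertex $g$ to lie in exactly the single round-set $R_{d(I,g)}$, and hence $R_j = D_j$ for every $j \geq 0$. Since $D_j \neq \emptyset$ precisely when $0 \leq j \leq e(I)$, we have $R_{e(I)} \neq \emptyset$ and $R_{e(I)+1} = \emptyset$, so flooding terminates exactly after round $e(I)$.

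For $(\Rightarrow)$, I would prove the contrapositive: if $G$ is not ec-bipartite then $R_{e(I)+1} \neq \emptyset$. Pick an ec node $g$ at maximum possible distance $j^*$ from $I$. Lemma 3.2(iii) immediately gives $g^2 \in R_{j^*+1}$, which settles the case $j^* = e(I)$. When $j^* < e(I)$, the plan is to iterate a Lemma 3.2(iii)-style step outward: given $h_i \in D_i$ with $h_i^2 \in R_{i+1}$ and a neighbour $h_{i+1} \in D_{i+1}$, in round $i+2$ node $h_i$ sends to $h_{i+1}$---because $h_{i+1}$ first receives only in round $i+1$ and so did not send in that round---yielding $h_{i+1}^2 \in R_{i+2}$. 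Iterating from $i = j^*$ up to $i = e(I)-1$ produces some $h_{e(I)} \in D_{e(I)}$ with $h_{e(I)}^2 \in R_{e(I)+1}$, so $R_{e(I)+1} \neq \emptyset$.

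The main obstacle is keeping the induction alive: the chosen $h_i$ must actually have a neighbour in $D_{i+1}$, which fails if $h_i$ is peripheral (all its neighbours lying in $D_{i-1} \cup D_i$). To handle this I would exploit the maximality of $j^*$: the subgraph induced on $\bigcup_{i > j^*} D_i$ contains no ec edges and therefore looks like a BFS fringe above $D_{j^*}$, so one can select the chain $h_{j^*}, h_{j^*+1}, \ldots, h_{e(I)}$ along a shortest path in $G$ from $g$ to some witness vertex of $D_{e(I)}$. In the worst sub-case, where $g$ and its ec partner are both peripheral at $D_{j^*}$, the echo first reflects inward (into $D_{j^*-1}$) and then re-emerges outward along a different branch that does reach $D_{e(I)}$; this bounce is tracked by an induction on $e(I)-j^*$ combined with a case analysis on whether $g$ has an outward neighbour, again leveraging ec-freeness above $D_{j^*}$ to keep the re-emergent wave strictly outward.
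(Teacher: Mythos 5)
Your $(\Leftarrow)$ direction is correct and is essentially the paper's argument made explicit: ec-bipartite $\Rightarrow$ no node in two round-sets (Lemma 3.3) $\Rightarrow$ $R_j=D_j$ for all $j$ (Lemma 3.2(i)) $\Rightarrow$ the last non-empty round-set is $R_{e(I)}$. The outward induction step in your $(\Rightarrow)$ direction is also sound as far as it goes: if $h_i\in D_i$ has $h_i^2\in R_{i+1}$ and a neighbour $h_{i+1}\in D_{i+1}$, then indeed $h_{i+1}^2\in R_{i+2}$, because Lemma 3.2(i) gives $h_{i+1}\notin R_i$, hence $h_{i+1}\not\rightarrow h_i$ in round $i+1$, hence $h_i$ sends to $h_{i+1}$ in round $i+2$.

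The gap is exactly where you flag ``the main obstacle,'' and the proposed repair does not close it. Choosing $g$ to be an ec node at \emph{maximum} distance $j^*$ buys you nothing: an outward chain $h_{j^*},h_{j^*+1},\ldots,h_{e(I)}$ with $h_i\in D_i$ and $h_i$ adjacent to $h_{i+1}$ need not exist from $g$ at all, and a shortest path in $G$ from $g$ to a witness of $D_{e(I)}$ is not such a chain in general. Take $I=\{v_0\}$, a triangle on $v_0,a,b$ (so $a,b\in D_1$ are the only ec nodes and $j^*=1$), and a long pendant path $v_0,u_1,\ldots,u_m$ (so $e(I)=m$). Every neighbour of $a$ lies in $D_0\cup D_1$, so the echo must travel back through $v_0$ and then out along the $u$-branch; the ``reflection into $D_{j^*-1}$ followed by re-emergence'' is not a single bounce but an inward propagation through possibly many levels whose timing must be controlled at every step (e.g.\ one must verify that $v_0$, having received from both $a$ and $b$ in round 3, still sends outward to $u_1$ in round 4). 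This inward and lateral propagation of second occurrences is precisely the content of the paper's Lemma 3.7 (if $h^2\in R_j$ then every neighbour $g$ has $g^2\in R_{j-1}$, $R_j$ or $R_{j+1}$), whose proof is a genuine four-case analysis; with it one follows an arbitrary path from the ec node to any $w\in D_{e(I)}$, concludes $w^2\in R_j$ for some $j$, and Lemma 3.2(i) then forces $j>e(I)$. Your sketch assumes the conclusion of that lemma in its hardest case rather than proving it, so as written the $(\Rightarrow)$ direction is not established. (The paper's own proof of Theorem 3.5 is a terse chain of equivalences that also leans on this propagation implicitly; the rigorous route is via Lemma 3.7 and Corollary 3.8.)
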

\begin{proof}
\[
\begin{array}{llll}
\hbox{$G$ is ec-bipartite} & \hbox{iff} & \hbox{$G$ has no $ec$ nodes} & \hbox{(by Definition 3.1(iv))} \\
\; &\hbox{iff} & \hbox{no node appears in 2 round-sets} &  \hbox{(by Lemma 3.3)}  \\
\; & \hbox{iff} & \hbox{distant nodes get message last} &  \; \\
\; & \hbox{iff} & \hbox{$R_{e(I)}$ is the last non-empty round-set} &  \hbox{(by Lemma 3.2(i))}  \\
\end{array}
\]
\end{proof}
\begin{corollary}
Let $G$ be a graph and suppose that $I$ contains a single initial node $g_0$ of eccentricity $e$. Then, flooding will have terminated after round $e$ if and only if $G$ is bipartite
\end{corollary}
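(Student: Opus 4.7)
The plan is to derive the corollary by straightforward chaining of the two preceding results, Theorem 3.5 and Lemma 3.4, which together already capture the content of the statement once specialized to a singleton $I=\{g_0\}$.

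First, I would observe that since $I = \{g_0\}$, the eccentricity $e(I)$ of the set of initial nodes coincides with the eccentricity $e$ of the single node $g_0$, because $d(I,g) = d(g_0,g)$ for every $g \in G$. Thus Theorem 3.5, applied to $I=\{g_0\}$, tells us that flooding terminates after round $e$ if and only if $G$ is ec-bipartite, i.e.\ has no $ec$ nodes.

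Next, I would invoke Lemma 3.4, which asserts that in the single-source case $G$ is bipartite if and only if $G$ has no $ec$ nodes, i.e.\ if and only if $G$ is ec-bipartite (by Definition 3.1(iv)). Combining the two biconditionals yields: flooding terminates after round $e$ iff $G$ is ec-bipartite iff $G$ is bipartite.

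Since both Theorem 3.5 and Lemma 3.4 are already proved, there is no substantive obstacle here — the corollary is essentially a transitivity argument on two \emph{iff} statements. The only thing worth being careful about is the reduction $e(I)=e$, which holds automatically from the definition of $d(I,g)$ when $|I|=1$, and noting that Lemma 3.4 requires the single-source hypothesis (without it, $I$ itself could contain two adjacent nodes forming $ec$ nodes, breaking the equivalence with bipartiteness).
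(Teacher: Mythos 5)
Your proposal is correct and follows exactly the paper's own route: the paper's proof is a one-line citation of Lemma 3.4, Theorem 3.5, and the fact that $e = e(I)$, which is precisely the chaining of biconditionals you describe. Your added remarks on why $e(I)=e$ and why the single-source hypothesis is needed for Lemma 3.4 are accurate elaborations of the same argument.
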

\begin{proof}
Follows from Lemma 3.4, Theorem 3.5 and the fact that $e = e(I)$.
\end{proof}

$\;$

\noindent To find the time to termination in general graphs we need to find a bound on when nodes can belong to a round-set for the second time. As nodes can only belong to at most two round-sets, by Theorem 2.2, this will give a bound for termination of flooding in general graphs. The following lemma relates the round-sets of second occurrences of neighbouring nodes.
\begin{lemma} Let $G$ be a graph and $I \subseteq G$ a set of initial nodes. If $h \in G$ and $h^2 \in R_j$ for some $j  \geq 1$, and if $g$ is a neighbour of $h$, then
\[
g^2 \in R_{j-1} \;\; {\it or} \;\; g^2 \in R_{j}  \;\; {\it or} \;\; g^2 \in R_{j+1}
\]
\end{lemma}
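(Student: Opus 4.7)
The plan is to combine Theorem 2.2 (each node lies in at most two round-sets) with a direct analysis of how the edge $(g,h)$ is used in the rounds when $h$ receives a message. First I observe that by Theorem 2.2 and the definition of $h^2$, there is a unique $i$ with $0 \le i < j$ such that $h^1 \in R_i$, and $R_i, R_j$ are $h$'s only memberships. I would then apply the forwarding rule to the edge $(g,h)$ at each of rounds $i$ and $j$: since $h \in R_r$ obliges $h$ to send in round $r+1$ to every neighbour that did not send to $h$ in round $r$, for each $r \in \{i,j\}$ either $g$ sent to $h$ in round $r$ (so $g \in R_{r-1}$, with $R_{-1}$ taken as empty) or $h$ sent to $g$ in round $r+1$ (so $g \in R_{r+1}$). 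Hence $g \in R_{i-1} \cup R_{i+1}$ and $g \in R_{j-1} \cup R_{j+1}$, and by Theorem 2.2 these exhaust $g$'s at most two memberships.

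From here, when $j \ge i+3$ the two constraint sets are disjoint, so $g$ has exactly one membership in each, $g^2$ is the later, and lies in $\{R_{j-1}, R_{j+1}\}$. When $j = i+1$ the sets $\{R_{i-1}, R_j\}$ and $\{R_{j-1}, R_{j+1}\}$ remain disjoint, and a short enumeration of the four possible pairings shows $g^2 \in \{R_{j-1}, R_j, R_{j+1}\}$; the extra option $g^2 \in R_j$ arises precisely from the pairing $g^1 \in R_{j-1}$, $g^2 \in R_j$.

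The hard part will be the overlap case $j = i+2$, in which $R_{i+1} = R_{j-1}$ so that both constraints can ostensibly be satisfied by the single membership $g \in R_{i+1}$. I plan to rule this scenario out by contradiction: if $g$'s only round-set were $R_{i+1}$, then $g \notin R_{i-1}$ forces $h$ to send to $g$ in round $i+1$, while $g \notin R_{j+1}$ forces $g$ to send to $h$ in round $j = i+2$; but the latter requires $h$ not to have sent to $g$ in round $i+1$, directly contradicting the former. Thus $g$ has two distinct memberships and $g^2 \in \{R_{j-1}, R_{j+1}\}$, completing the case analysis.
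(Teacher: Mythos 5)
Your argument is correct, and it takes a genuinely different route from the paper's. The paper proves this lemma by casing on the distance class of $g$ relative to $h$ (whether $g \in D_i$, $D_{i-1}$ or $D_{i+1}$, where $h \in D_i$) and leans on the distance-set machinery of Lemma 3.2 ($D_j \subseteq R_j$, $R_j \cap D_i = \emptyset$ for $i>j$, ec nodes re-entering at round $d(I,g)+1$), identifying in each case exactly which of $R_{j-1}, R_j, R_{j+1}$ contains $g^2$. You instead case on the gap $j-i$ between $h$'s two visits and use only Theorem 2.2 together with the local dichotomy on the edge $\{g,h\}$: for each round $r$ in which $h$ lies in a round-set, either $g$ sent to $h$ in round $r$ (so $g \in R_{r-1}$) or $h$ sends to $g$ in round $r+1$ (so $g \in R_{r+1}$). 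This is more elementary and self-contained --- it never mentions distances --- and it isolates the one delicate configuration ($j=i+2$ with $g$'s only visit at $R_{j-1}$), which you correctly eliminate via the direction contradiction on the edge ($h \rightarrow g$ in round $i+1$ versus $g \rightarrow h$ in round $i+2$); the paper's ``$g \in D_{i+1}$, $g$ sends to $h$'' subcase uses essentially the same contradiction. What the paper's version buys is the finer distance-indexed information that the surrounding results of Section 3 need anyway; what yours buys is independence from that machinery, plus the sharper observation that $g^2 \in R_j$ can only occur when $h$'s two visits are consecutive rounds. When writing it up, do make explicit the conventions you only flag in passing: $R_{-1} = \emptyset$ and the treatment of $r=0$, where $h \in R_0 = I$ sends to all neighbours in round 1.
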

\begin{proof}
Let $i$ be the distance of $h$ from $I$, i.e. $h \in D_i$. Then, as $h^2 \in R_j$, $j>i$ by Lemma 3.2(i). As $g$ is a neighbour of $h$, $g \in D_{i}$ or $g \in D_{i-1}$ or $g \in D_{i+1}$.

\noindent {\it Case} $g \in D_i$.

\noindent As $h,g \in D_i$ are neighbours they are both $ec$ nodes. Thus, by Lemma 3.2(iii), $h^2 \in R_{i+1}$ and $g^2 \in R_{i+1}$. Therefore, $j = i+1$ and $g^2 \in R_j$.

\noindent {\it Case} $g \in D_{i-1}$.

\noindent If $g \in R_j$ ($\neq R_{i-1}\;\hbox{as} \; j>i$) then, as $g^1 \in D_{i-1} \subseteq R_{i-1}$ by Lemma 3.2(i), it must be the case that $g^2 \in R_j$. If $g\notin R_j$ and $g \in R_{j-1}$ ($\neq R_{i-1}\;\hbox{as} \; j>i$) then, as $g^1 \in   R_{i-1}$ by Lemma 3.2(i), it must be the case that $g^2 \in R_{j-1}$. If $g\notin R_j$ and $g \notin R_{j-1}$ then, as $h \in R_j$, $h$ sends to $g$ in round $j+1$ and so $g \in R_{j+1}$ ($\neq R_{i-1}\;\hbox{as} \; j>i$). As $g^1 \in R_{i-1}$, it must be the case that $g^2 \in R_{j+1}$.

\noindent {\it Case} $g \in D_{i+1}$, {\it g does not send to h in round j}.

\noindent In this case, as $h \in R_j$, $h$ sends to $g$ in round $j+1$. Thus, $g \in R_{j+1}$ ($\neq R_{i+1}\;\hbox{as} \; j>i$) and therefore, as $g^1 \in D_{i+1} \subseteq R_{i+1}$ by Lemma 3.2(i), it must be the case that $g^2 \in R_{j+1}$.

\noindent {\it Case} $g \in D_{i+1}$, {\it g sends to h in round j}.

\noindent In this case $g \in R_{j-1}$. We show that $g^1 \notin R_{j-1}$. Assume, on the contrary, that $g^1 \in R_{j-1}$. Then, by Lemma 3.2(i), $g^1 \in D_{i+1} \subseteq R_{i+1}$ and thus $j-1 = i+1$. Hence, by Lemma 3.2(i), $h^1 \in D_i \subseteq R_i = R_{j-2}$. Also, $g \notin R_{j-3}$ as $g^1 \in R_{j-1}$. To summarize:
\[
g \notin R_{j-3},\;\; h^1 \in R_{j-2},\;\; g^1 \in R_{j-1},\;\; h^2 \in R_j
\]So, $h$ sends to $g$ in round $j-1$ and $g$ sends to $h$ in round $j$ by the case assumption. This is a contradiction. Thus, the assumption that $g^1 \in R_{j-1}$ is false and, as $g\in R_{j-1}$, it follows that $g^2 \in R_{j-1}$. This completes the proof.
\end{proof}
\begin{corollary}
 Let $G$ be a graph and $I \subseteq G$ be a set of initial nodes. Then $G$ has an ec node if and only if all nodes are in exactly two round-sets. 
\end{corollary}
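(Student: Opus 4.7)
The plan is to handle the two directions separately. The reverse implication is immediate: if every node appears in two round-sets then at least one node does, so by Lemma~3.3 the graph $G$ contains an ec node.

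For the forward implication I would combine Lemma~3.2(iii), Lemma~3.7, and Theorem~2.2. Given an ec node $g$, say $g \in D_j$, Lemma~3.2(iii) yields $g^2 \in R_{j+1}$, so $g$ already appears in two round-sets. The key propagation step uses Lemma~3.7: whenever a node $h$ satisfies $h^2 \in R_m$ with $m \geq 1$, every neighbour $v$ of $h$ also has a second occurrence, necessarily in $R_{m-1}$, $R_m$, or $R_{m+1}$. Iterating along a path from $g$ to an arbitrary $v \in G$ (using the standing connectedness of the flooded component), a short induction on path length shows every node of $G$ has at least two occurrences. Theorem~2.2 caps the number of occurrences at two, so every node appears in exactly two round-sets.

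The main subtlety is just checking that Lemma~3.7 remains applicable throughout the propagation: one needs the second-occurrence index to stay $\geq 1$. This is automatic, since $g^2 \in R_{j+1}$ with $j+1 \geq 1$, and any second occurrence must sit in some $R_k$ with $k \geq 1$ (each node in $R_0$ is an initial node appearing for the first time, so no second occurrence can land there). Connectedness of the relevant component is the usual implicit assumption for flooding, and an ec node must lie in it since it is at finite distance from $I$. Beyond this bookkeeping, no new combinatorial ingredient beyond the previously established lemmas is required.
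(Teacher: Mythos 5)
Your proposal is correct and is essentially the paper's own argument: the paper's proof is the one-liner ``Follows from Lemmas 3.3 and 3.7,'' meaning exactly your decomposition --- Lemma 3.3 for the reverse direction and for producing one node with a second occurrence, then Lemma 3.7 propagated along paths for the forward direction, with Theorem 2.2 capping occurrences at two. Your extra bookkeeping (the base case via Lemma 3.2(iii), the observation that no second occurrence can land in $R_0$ so the index stays $\geq 1$, and the implicit connectedness assumption) is just a careful filling-in of what the paper leaves implicit.
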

\begin{proof}
Follows from Lemmas 3.3 and 3.7.
\end{proof}

\noindent We can now give bounds for time to termination of flooding for graphs that have ec nodes.
\begin{theorem}Let $G$ be a graph that is not ec-bipartite with diameter $d$. Then, flooding terminates after round $j$ where $j$ is in the range 
\[
e(I)  < j \leq {\it min} \{ d(I, g_{ec}) + e(g_{ec}) +1 \; : \; g_{ec} \in G\; \hbox{{\it is an ec point}} \}.
\]
\end{theorem}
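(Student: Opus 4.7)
The plan is to establish the two bounds separately, using the structural results already developed.

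For the lower bound $j > e(I)$, I would invoke Corollary 3.8: since $G$ is not ec-bipartite, it has an ec node, so every node of $G$ appears in exactly two round-sets. Pick any node $g^* \in G$ with $d(I,g^*) = e(I)$. By Lemma 3.2(i), $g^*$ first appears in $R_{e(I)}$, so its second occurrence must be in some $R_j$ with $j > e(I)$. In particular, $R_{e(I)+1}$ or a later round-set is non-empty, so flooding cannot have terminated by round $e(I)$.

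For the upper bound, fix an arbitrary ec node $g_{ec}$. By Lemma 3.2(i) we have $g_{ec}^1 \in R_{d(I,g_{ec})}$, and by Lemma 3.2(iii) we have $g_{ec}^2 \in R_{d(I,g_{ec})+1}$. Set $N := d(I,g_{ec}) + 1$. I would now show by induction on $k \geq 0$ that for every $g \in G$ with $d(g_{ec},g) \leq k$, the second occurrence $g^2$ lies in $R_m$ for some $m \leq N + k$. The base case $k=0$ is exactly the observation about $g_{ec}$ itself. For the inductive step, take $g$ with $d(g_{ec},g) = k+1$ and choose a neighbour $h$ of $g$ with $d(g_{ec},h) = k$; by induction $h^2 \in R_m$ for some $m \leq N+k$, and Lemma 3.7 then forces $g^2 \in R_{m-1} \cup R_m \cup R_{m+1}$, giving $g^2 \in R_{m'}$ with $m' \leq N + k + 1$. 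Since every $g \in G$ satisfies $d(g_{ec},g) \leq e(g_{ec})$, every node's second (and hence last) occurrence lies in some round-set $R_m$ with $m \leq d(I,g_{ec}) + e(g_{ec}) + 1$, so no messages are sent after that round. Minimising over the choice of ec node $g_{ec}$ yields the stated upper bound.

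The main obstacle, such as it is, is the inductive step: one has to be sure that the node $h$ chosen in the induction genuinely has $h^2 \in R_m$ with $m \le N+k$ (which requires Corollary 3.8 so that $h^2$ exists at all), and that Lemma 3.7's three-round window is really applied to $h^2$ rather than $h^1$. Both are automatic once Corollary 3.8 and Lemma 3.7 are in hand, so the argument reduces to a clean induction along a shortest path in $G$ from $g_{ec}$ to the target node. Combining the two halves gives exactly the interval in the statement.
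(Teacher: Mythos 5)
Your proposal is correct and follows essentially the same route as the paper: the upper bound comes from anchoring at an ec node via Lemma 3.2(iii) and propagating the second-occurrence round along shortest paths by repeated application of Lemma 3.7, and the lower bound comes from the fact that a non-ec-bipartite graph forces second occurrences after round $e(I)$ (the paper cites Theorem 3.5 where you cite Corollary 3.8 plus Lemma 3.2(i), but these rest on the same lemmas). The only difference is cosmetic: you organise the propagation as an induction on distance from $g_{ec}$ rather than as an explicit chain of inequalities along a fixed path.
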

\begin{proof}
If $G$ is not ec-bipartite it has an $ec$ node $g_{ec}$, by Definition 3.1(iv).  By Lemma 3.2(iii), $g_{ec}^2 \in R_k$ where $k = d(I,g_{ec}) +1$. Let $h$ be an arbitrary node in $G$ other than $g_{ec}$. Then, there is a path
\[
h_0=g_{ec} \longrightarrow h_1 \longrightarrow \ldots \longrightarrow h_l = h
\]where $l \leq e(g_{ec})$. By repeated use of Lemma 3.7, 
\[
\begin{array}{llll}
h_1^2 \in R_{j_1} & {\it where} & k-1 \leq j_1 \leq k+1, & \; \\
h_2^2 \in R_{j_2} & {\it where} & j_1-1 \leq j_2 \leq j_1+1, & \; \\
\ldots & \; & \; & \; \\
h_l^2 \in R_{j_l} & {\it where} & j_{l-1}-1 \leq j_l \leq j_{l-1}+1. & \; \\
\end{array}
\]Thus, 
\begin{equation}
h^2_l \in R_{j_l} \;\;\; {\it where} \;\;\; k-l \leq j_l \leq k+l
\end{equation}Put $j = j_l$. From (9), as $k = d(I,g_{ec}) +1 $ and as $l \leq e(g_{ec})$,
\[
h^2_l \in R_{j} \;\;\; {\it where} \;\;\;j \leq  d(I, g_{ec}) + e(g_{ec}) +1
\]and, as $g_{ec}$ is any ec node,
\[
h^2_l \in R_{j} \;\;\; {\it where} \;\;\;j \leq {\it min}\{ d(I, g_{ec}) + e(g_{ec}) +1 \}
\]
As $G$ is not ec-bipartite, $j>e(I)$ by Theorem 3.5 and the proof is complete.
\end{proof}
\begin{corollary}Let $G$ be a graph and $I \subseteq G$ be a set of initial nodes which contain an ec node. Then, flooding terminates after round $j$ where $j$ is in the range 
\[
e(I)  < j \leq {\it min} \{ e(g_{ec}) +1 \; : \; g_{ec} \in G\; \hbox{{\it is an ec point}} \}.
\]
\end{corollary}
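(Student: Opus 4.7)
The plan is to obtain this corollary as an immediate specialization of Theorem 3.9 under the added hypothesis that $I$ itself contains an ec node. Since $I$ contains an ec node, $G$ possesses an ec node, so $G$ is not ec-bipartite and Theorem 3.9 applies. This delivers the lower bound $e(I) < j$ directly, together with the more general upper bound
\[
j \leq \min\{d(I, g_{ec}) + e(g_{ec}) + 1 \; : \; g_{ec} \in G \text{ is an ec node}\}.
\]

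The key observation is that for any ec node $g_{ec}$ that lies in $I$, Definition 3.1(i) forces $d(I, g_{ec}) = 0$, since the minimum defining $d(I, g_{ec}) = \min\{d(g_0, g_{ec}) : g_0 \in I\}$ includes the term $d(g_{ec}, g_{ec}) = 0$. Substituting this into the Theorem 3.9 estimate collapses the upper bound to $j \leq e(g_{ec}) + 1$ for every ec node $g_{ec}\in I$. Taking the minimum over such $g_{ec}$ produces the corollary's bound, since the hypothesis guarantees that at least one ec node is available in $I$ to be substituted.

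There is essentially no substantive obstacle: the added hypothesis "$I$ contains an ec node" simply places an ec node at distance $0$ from $I$, eliminating the $d(I, g_{ec})$ contribution in the Theorem 3.9 bound and leaving precisely the sharper expression stated here. The corollary is thus a one-line reduction once Theorem 3.9 is in hand.
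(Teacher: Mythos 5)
Your proposal is correct and follows essentially the same route as the paper, whose proof is the one-line observation that $d(I,g_{ec})=0$ for any ec node in $I$, applied to the bound of Theorem 3.9. Your write-up merely spells out the same specialization in more detail (including the preliminary remark that $I$ containing an ec node makes $G$ non-ec-bipartite, so Theorem 3.9 is applicable).
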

\begin{proof}
Follows from Theorem 3.9 as $d(I,g_{ec}) =0$ for any ec node in $I$.
\end{proof}
\begin{corollary}Let $G$ be a non-bipartite graph with diameter $d$ and a single initial node $g_0 \in G$ of eccentricity $e$. Then, flooding terminates after round $j$ where $j$ is in the range $e < j \leq e+d+1$.
\end{corollary}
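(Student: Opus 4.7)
The plan is to obtain this as a direct specialization of Theorem 3.9 to the single-source case, using Lemma 3.4 to bridge between the graph-theoretic hypothesis ``non-bipartite'' and the structural hypothesis ``not ec-bipartite'' used by that theorem.

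First I would set $I = \{g_0\}$ and note that $e(I) = e(\{g_0\}) = e$ and $d(I, g) = d(g_0, g)$ for every $g \in G$. Since $I$ consists of a single initial node, Lemma 3.4 applies and tells us that $G$ is bipartite iff $G$ has no ec nodes. By hypothesis $G$ is non-bipartite, so $G$ has at least one ec node, and hence $G$ is not ec-bipartite by Definition 3.1(iv). This is exactly the hypothesis needed to invoke Theorem 3.9.

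Next I would apply Theorem 3.9 to conclude that flooding terminates after round $j$ with
\[
e < j \leq \min\{ d(g_0, g_{ec}) + e(g_{ec}) + 1 \;:\; g_{ec} \in G \text{ is an ec node}\}.
\]
The remaining task is to bound the right-hand side by $e + d + 1$. For any ec node $g_{ec}$, the definition of the eccentricity of $g_0$ gives $d(g_0, g_{ec}) \leq e$, and the definition of diameter gives $e(g_{ec}) \leq d$. Hence every term in the minimum is at most $e + d + 1$, and therefore so is the minimum itself.

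There is no real obstacle here; the only small point to be careful about is that Lemma 3.4 is stated only for a single initial node, which is precisely our setting, so the translation ``non-bipartite'' $\Leftrightarrow$ ``has ec nodes'' is legitimate and the rest is routine bookkeeping against Theorem 3.9.
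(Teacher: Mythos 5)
Your proposal is correct and follows essentially the same route as the paper: specialize Theorem 3.9 to $I=\{g_0\}$ and bound the minimum by $e+d+1$ using $d(g_0,g_{ec})\leq e$ and $e(g_{ec})\leq d$. Your explicit appeal to Lemma 3.4 to convert ``non-bipartite'' into ``not ec-bipartite'' (the actual hypothesis of Theorem 3.9) is a detail the paper leaves implicit, but it is the right justification and not a departure from the paper's argument.
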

\begin{proof}
Put $I = \{g_0 \}$ so that $e(I) = e$.  Then, by Theorem 3.9, flooding terminates after $j$ rounds where $j$ is in the range
\[
e  < j \leq {\it min} \{ d(g_0, g_{ec}) + e(g_{ec}) +1  :  g_{ec} \in G\; \hbox{{\it is an ec point}} \} \leq e + d + 1 .
\]
\end{proof}

\noindent Figure 2 shows that the bounds are sharp. Sharpness of the upper bound is shown by the graph on the left in which flooding terminates after $e+d+1=7$ rounds. The slightly complicated graph on the right terminates after $e+1=5$ rounds, demonstrating sharpness of the lower bound and also showing that flooding can terminate in a non-bipartite graph before round $d$ where $d$ is the diameter. In both graphs $g_0$ is the initial node.

\begin{figure}[h!]
\centering 
\includegraphics[scale=0.75]{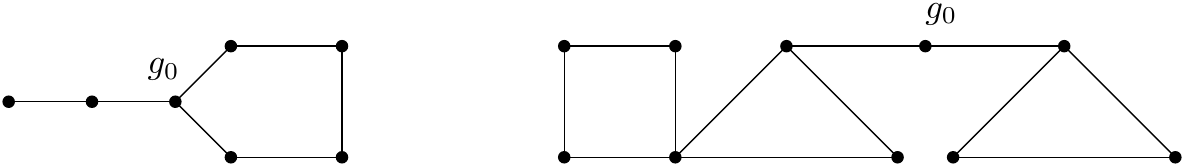} 
\caption{Sharpness of bounds}
\end{figure}

The clear separation in the termination times of bipartite and non-bipartite graphs, in the case of a single initial node, does not carry over neatly to the general case of multiple initial nodes. For the general case, it is the presence or otherwise of ec nodes that is the distinguishing property and we have defined ec-bipartite graphs for this purpose. Unlike the single node case, having ec nodes or not in the multiple initial node case cuts across the bipartite/non-bipartite boundary, i.e. bipartite graphs may have ec points and non-bipartite graphs may not. The recent work by Turau \cite{turau1} looks at termination times in terms of the `k-flooding problem' which aims to find a set S of size k such that amnesiac flooding when started concurrently by all nodes of S terminates in the least number of rounds.

\section{Dynamic flooding}
So far, we have considered flooding of a single message which initiates at a single point in time in round 0, albeit possibly from several nodes. In this section we show that amnesiac flooding will still terminate in dynamic settings where multiple floodings are initiated from multiple nodes in multiple rounds with multiple messages, even when floodings from previous initiations have not completed. We consider floodings in which nodes receiving messages in a given round only send a single message per edge in the next round. The case of `parallel flooding', where multiple floodings operate independently of each other resulting in nodes possibly sending multiple messages along an edge in a round, terminates trivially given the results in sections 2 and 3. 

There are two general cases in which messages $M_0, \ldots , M_h , \ldots $ are being flooded by nodes $x_{i_0}^0, \ldots, x_{i_h}^h, \ldots $ with initial rounds $i_0 , \ldots , i_h , \ldots$ respectively, such that at most one message per edge per round is sent. In the first case, a node that receives messages in a round sends one of the received messages $M_h$ to all neighbours from which it did not receive any message. In the second case, the node sends $M_h$ to all neighbours from which it did not receive $M_h$. We call these two cases `partial-send' and `full-send' respectively. We subdivide the second case into the subcase `ranked full-send' where the messages  $M_0, \ldots , M_h , \ldots $ are ranked and the highest ranked received message $M_h$ is sent on, and the subcase `unranked full-send' where a random message is sent on. 
\begin{definition}{\textbf{Amnesiac flooding algorithms for dynamic flooding.}}
Let $G$ be a graph, $x_{i_0}^0, \ldots , x_{i_h}^h , \ldots  \in G$ be possibly different initiating nodes flooding possibly different messages $M_0, \ldots , M_h , \ldots $ with possibly different initial rounds $i_0, \ldots , i_h , \ldots  $  respectively, where the subscript $h$ distinguishes the floodings (i.e $h \neq h' \Rightarrow (x^h_{i_h}, i_h) \neq (x^{h'}_{i_{h'}}, i_{h'})$). We assume a node $x_{i_h}$ does not initiate flooding (initiation means $x_{i_h}$ sends a message to all of its neighbours in the next round) of a message in a round in which it receives a message. Consider the properties below for an algorithm for flooding multiple messages, where $N(g)$ denotes the set of all neighbours of a node $g \in G$ and $N_{g,i,h} \subseteq N(g)$ the subset of neighbours of $g$ from which $g$ receives message $M_h$ in round $i$.
\begin{itemize}
\item[(i)]
For all $h \geq 0$, $x_{i_h}^h$ sends message $M_h$ to all of its neighbours in round $i_h+1$. 
\item[(ii)]
Suppose a node $g \in G$ receives messages from neighbours in round $i > 0$. Let $h' \geq 0$ be such that $N_{g,i,h'} \neq \emptyset $. Then, in round $i+1$, $g$ sends $M_{h'}$ to all neighbours in $N(g) \setminus \bigcup_{h \geq 0} N_{g,i,h}$  and sends no other message.
\item[(iii)]
Suppose a node $g \in G$ receives messages from neighbours in round $i > 0$ and that the $rank$ of each message $M_h$ is the subscript $h$. Let $ h'$ be the largest $h' \geq 0$ such that $N_{g,i,h'} \neq \emptyset $. Then, in round $i+1$, $g$ sends $M_{h'}$ to all neighbours in $N(g) \setminus N_{g,i,h'}$  and sends no other message.
\item[(iv)]
Suppose a node $g \in G$ receives messages from neighbours in round $i > 0$. Let $h' \geq 0$ be such that $N_{g,i,h'} \neq \emptyset $. Then, in round $i+1$, $g$ sends $M_{h'}$ to all neighbours in $N(g) \setminus N_{g,i,h'}$  and sends no other message.
\end{itemize}The flooding algorithm defined by (i) and (ii) is called {\it partial-send}, that defined by (i) and (iii) is called {\it ranked full-send} and that defined by (i) and (iv) is called {\it unranked full-send}.
\end{definition}In 4.1 and 4.2 below we show that, in the cases of partial-send and ranked full-send flooding of multiple messages, no node receives a given message more than twice, and so flooding terminates when the number of messages is finite. As a further extension to the basic flooding model of sections 2 and 3, both of these results hold in a dynamic setting where the graph $G$ may lose edges or nodes over time. In section 5.3, the general case of unranked full-send flooding of multiple messages is shown not to terminate even if the number of messages is finite.
\subsection{Multiple messages - partial-send flooding}
It may seem that flooding initiated by a node $g$ in round $i>0$  can be simulated as a round 0 flooding by adding an extra path of length $i$ to $G$ so that the message arrives at $g$ in round $i$ - in this way reducing the problem to the simple case of only round 0 floodings of sections 2 and 3. However, flooding initiations in later rounds can easily produce the same node in evenly spaced round-sets which cannot be simulated by any graph floodings from round 0 only, by the proof of Theorem 2.2.  So, the suggested simulation by a larger graph is not possible in general and an alternative proof is needed.

In the remainder of the paper, for nodes $x,y \in G$ and $h \geq 0$, $ x \rightarrow y $ will mean `$x$ sends some message to $y$', i.e. $y$ receives some message from $x$, and $x \stackrel{M_h}{\rightarrow} y$ will mean `$x$ sends message $M_h$ to $y$', i.e. $y$ receives message $M_h$ from $x$. We shall also use $\not\rightarrow$ to mean `does not send' in a similar way. 
\begin{definition}
Let $(G,E)$ be a graph with nodes $G$ and edges $E$. An {\it even flooding cycle} $c$, abbreviated {\it ef cycle}, is a mapping $c: C_{2n} \rightarrow G$ for some $n \geq 2$ where $C_{2n} = \{ 0, 1, \ldots , 2n-1 \}$, such that $\{ c(i) , c(i+1) \} \in E$ for all $i \in C_{2n}$. Arithmetic on elements of $C_{2n}$ is modulo $2n$. Whilst $C_{2n}$ is a cycle, its image $c(C_{2n})$ is a closed walk but not necessarily a cycle in $G$ - see the example in figure 3. A message $c(i) \rightarrow c(i+1)$, where $i \in C_{2n}$, in round $j \geq 0$ is a {\it clockwise} message along $c$ and a message $c(i) \rightarrow c(i-1)$, where $i \in C_{2n}$, is an {\it anticlockwise} message. 
\end{definition}
\begin{figure}[h!]
\centering 
\includegraphics[scale=0.75]{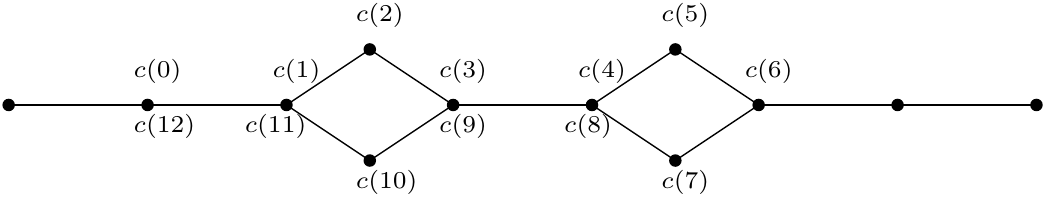} 
\caption{Even flooding cycle example $n=6$}
\end{figure}
\noindent In any given round, there will be a number of messages sent clockwise along $c$ and a number of messages anticlockwise. Our interest is in the number of messages sent in each direction from evenly spaced points of $c$. (We will refer to $i \in C_{2n}$ and $c(i) \in G$, loosely,  as `points' of $c$.)
\begin{lemma}
Let $G$ be a graph, flooding be partial-send as in Definition 4.1 and let $c$ be an ef cycle.  For all $j \geq 0$ let
\[
cl^s(even,j) = \{ i \; : \; c(i) \rightarrow c(i+1)\; in \; round \; j,\; i \; even, \; i \in C_{2n} \},
\]
\[
an^s(even,j) = \{ i \; : \; c(i) \rightarrow c(i-1)\; in \; round \; j,\; i \; even, \; i \in C_{2n} \},
\]
\[
cl^s(odd,j) = \{ i \; : \; c(i) \rightarrow c(i+1)\; in \; round \; j,\; i \; odd, \; i \in C_{2n} \},
\]
\[
an^s(odd,j) = \{ i \; : \; c(i) \rightarrow c(i-1)\; in \; round \; j,\; i \; odd,\; i \in C_{2n} \}.
\]Then, for all $j \geq0$,
\begin{equation}
| cl^s(even,j) | = |an^s(even,j)|  \;\; and \;\; | cl^s(odd,j) | = |an^s(odd,j)|.
\end{equation}
\end{lemma}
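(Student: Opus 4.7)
I would prove this by induction on $j$, treating the two equalities in (10) as a joint invariant. The base case $j = 0$ is immediate, since Definition 4.1(i) first triggers sending in rounds $i_h + 1 \geq 1$, so all four sets are empty in round $0$.

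For the inductive step, write $\alpha_{i, j} = 1$ iff $c(i) \rightarrow c(i+1)$ in round $j$, $\beta_{i, j} = 1$ iff $c(i) \rightarrow c(i-1)$ in round $j$, and $s_{i, j+1} = 1$ iff $c(i)$ sends any message in round $j+1$. Combining Definition 4.1(i) and 4.1(ii), and using that the clause ``a node does not initiate flooding in a round in which it receives a message'' makes the initiator and non-initiator-receiver cases mutually exclusive, one obtains the unified recurrences
\[
\alpha_{i, j+1} = s_{i, j+1}\bigl(1 - \beta_{i+1, j}\bigr), \qquad \beta_{i, j+1} = s_{i, j+1}\bigl(1 - \alpha_{i-1, j}\bigr),
\]
since $c(i+1) \rightarrow c(i)$ in round $j$ is exactly $\beta_{i+1, j} = 1$ and $c(i-1) \rightarrow c(i)$ in round $j$ is exactly $\alpha_{i-1, j} = 1$. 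For an initiator $c(i)$ with $i_h = j$, the right-hand factor is automatically $1$ because the no-receipt clause forces $\alpha_{i-1, j} = \beta_{i+1, j} = 0$, matching the fact that such a node sends to every neighbour in round $j+1$.

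Summing the difference of the two recurrences over even $i \in C_{2n}$ gives
\[
|cl^s(even, j+1)| - |an^s(even, j+1)| = \sum_{i \text{ even}} s_{i, j+1}\bigl(\alpha_{i-1, j} - \beta_{i+1, j}\bigr).
\]
The key observation is that $\alpha_{i-1, j} = 1$ or $\beta_{i+1, j} = 1$ each force $c(i)$ to have received in round $j$, hence $s_{i, j+1} = 1$, so the factor $s_{i, j+1}$ may be dropped. Reindexing by $k = i-1$ and $k = i+1$ (both odd) then collapses the right-hand side to $|cl^s(odd, j)| - |an^s(odd, j)|$, which vanishes by the inductive hypothesis. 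The odd equality at $j+1$ follows from the even equality at $j$ by the identical argument with parities swapped.

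The subtleties I expect to need some care are: (a) absorbing initiators into the single formula above, which rests on the ``no receipt on initiation'' clause of Definition 4.1, and (b) the fact that $c$ is a closed walk rather than a simple cycle, so indices in $C_{2n}$ may map to the same vertex and U-turns $c(i-1) = c(i+1)$ can occur. Neither is a real obstruction: the cardinalities in (10) are defined per index of $C_{2n}$, and in a U-turn $\alpha_{i-1, j}$ and $\beta_{i+1, j}$ record the same physical event and therefore agree, leaving the recurrences consistent.
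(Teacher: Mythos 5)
Your proof is correct and takes essentially the same route as the paper's: induction on $j$, with your indicator recurrence $\alpha_{i,j+1}=s_{i,j+1}(1-\beta_{i+1,j})$ being an algebraic repackaging of the paper's set identity $cl^s(even,j+1)=(cl^r(even,j)\setminus an^r(even,j))\cup ext(even,j)$, your factor $s_{i,j+1}$ playing the role of the paper's $ext$ set together with the received-a-message condition. The one imprecision is your claim that receipt in round $j$ forces $s_{i,j+1}=1$: a node that receives from \emph{all} of its neighbours in round $j$ sends nothing in round $j+1$, so that implication fails; however, in that case it receives from both $c(i-1)$ and $c(i+1)$, so $\alpha_{i-1,j}=\beta_{i+1,j}=1$ and the bracket $\alpha_{i-1,j}-\beta_{i+1,j}$ vanishes anyway, hence dropping the factor $s_{i,j+1}$ remains valid and the argument goes through.
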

\begin{proof}
By induction on $j$. In round $j=0$, every initial node is yet to send a message and thus there are neither clockwise nor anticlockwise messages and so (10) holds trivially. Assume that (10) holds for some $j \geq 0$. It is convenient to define the corresponding sets of points of $C_n$ receiving messages  for all $j \geq 0$:
\[
cl^r(odd,j) = \{ i+1 \; : \; c(i) \rightarrow c(i+1)\; in \; round \; j,\; i \; even, \; i \in C_{2n} \},
\]
\[
an^r(odd,j) = \{ i-1 \; : \; c(i) \rightarrow c(i-1)\; in \; round \; j,\; i \; even, \; i \in C_{2n} \},
\]
\[
cl^r(even,j) = \{ i+1 \; : \; c(i) \rightarrow c(i+1)\; in \; round \; j,\; i \; odd, \; i \in C_{2n} \},
\]
\[
an^r(even,j) = \{ i-1 \; : \; c(i) \rightarrow c(i-1)\; in \; round \; j,\; i \; odd,\; i \in C_{2n} \}.
\]Clearly, 
\[
| cl^s(even,j) | = |cl^r(odd,j)|,\; |an^s(even,j)| = |an^r(odd,j)| ,
\]
\begin{equation}
 | cl^s(odd,j) |= | cl^r(even,j)| \; and \; |an^s(odd,j) = |an^r(even,j)|.
\end{equation}We show that $|cl^s(even,j+1)|=|an^s(even,j+1)|$. A clockwise message $c(i) \rightarrow c(i+1)$, where $i$ is even, is sent in round $j+1$ in the following cases:
\begin{itemize}
\item[(a)]
$c(i-1) \rightarrow c(i)$ and $c(i+1) \not\rightarrow c(i)$ in round $j$,
\item[(b)]
$c(i-1) \not\rightarrow c(i)$ and $c(i+1) \not\rightarrow c(i)$ in round $j$, but $c(i)$ was an initial node about to send to all neighbours or $c(i)$ received a message from an external neighbour, i.e. other than $c(i-1)$ or $c(i+1)$, in round $j$.
\end{itemize}The set of messages sent clockwise from $c(i)$, for even $i$, in round $j+1$ as per case (a) is $cl^r(even,j) \setminus an^r(even,j)$. We denote the set of even $i$ such that $c(i)$ sends a clockwise, equivalently anticlockwise, message in round $j+1$ as per case (b), by $ext(even,j)$. Thus,
\[
cl^s(even,j+1) = (cl^r(even,j) \setminus an^r(even, j)) \cup ext(even,j)
\]Similarly,
\[
an^s(even,j+1) = (an^r(even,j) \setminus cl^r(even, j)) \cup ext(even,j)
\]Thus,
\[
|cl^s(even,j+1)| - |an^s(even,j+1)| =|cl^r(even,j) \setminus an^r(even, j)| -  |an^r(even,j) \setminus cl^r(even, j)| 
\]
\[
=\; |cl^s(odd,j)\setminus an^s(odd,j)| - |an^s(odd,j) \setminus cl^s(odd,j)| \;\; \hbox{(by (11))} \; 
\]
\[
=\;  |cl^s(odd,j)| -|an^s(odd,j)| =0 \;\; \hbox{(by induction)}.
\]A similar inductive argument shows that $|cl^s(odd,j+1)| = |an^s(odd,j+1)|$.
\end{proof}
\begin{theorem}
Let $G$ be a graph and flooding be partial-send. Suppose that initial node $x^h_{i_h}$ floods message $M_h$ in round $i_h$. Then, if 
\begin{equation}
x^h_{i_h} \stackrel{M_h}{\rightarrow} x^h_{i_h+1} \stackrel{M_h}{\rightarrow} \ldots \stackrel{M_h}{\rightarrow} x^h_{i_h+k}
\end{equation}is a succession of sent $M_h$ messages, no node $g\in G$ can occur more than twice in (12). Thus, no flooded message is received more than twice by a node in partial-send flooding which therefore terminates if the number of floodings of messages is finite.
\end{theorem}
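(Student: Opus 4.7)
The plan is a proof by contradiction combining a minimum-counterexample descent with the conservation law of Lemma 4.3. I would suppose some node $g$ occurs at three distinct chain positions and, by the parity pigeonhole on three integers, extract a pair of occurrences $x^h_{i_h + a} = x^h_{i_h + b} = g$ with even gap; picking $(a,b)$ across all counterexample chains so as to minimise $2n := b - a$. A gap of $2n = 2$ is immediately excluded: it would yield $g \stackrel{M_h}{\to} y \stackrel{M_h}{\to} g$ in two consecutive rounds, but the partial-send rule of Definition 4.1(ii) forbids $y$ from sending any message to $g$ in the round immediately after $g$ sent a message to $y$. Hence $n \geq 2$, and $c(j) := x^h_{i_h + a + j}$ defines an ef cycle $c: C_{2n} \to G$ with $c(0) = c(2n) = g$.

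Writing $r_0 = i_h + a + 1$, the chain itself contributes a clockwise send $c(j) \to c(j+1)$ in round $r_0 + j$ for each $j \in \{0, 1, \ldots, 2n - 1\}$, so that $j \in cl^s(\mathrm{parity}(j), r_0 + j)$. Lemma 4.3 then forces a matching anticlockwise send $c(i') \to c(i' - 1)$ with $i' \equiv j \pmod 2$ in each of these rounds. I would then trace these forced anticlockwise sends via the partial-send rule: the sender $c(i')$ either initiates some message $M_{h'}$ at round $r_0 + j - 1$ or received one and chose to forward it in the anticlockwise direction. By propagating this anticlockwise fragment (using Definition 4.1(ii) to push it one edge further each round) and splicing it with a suitable portion of the original clockwise chain, I would produce a closed walk of strictly smaller even length $2n' < 2n$ containing two occurrences of a common node, yielding a new shorter chain with a triple occurrence and contradicting the minimality of $2n$. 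The termination statement then follows: each of the finitely many messages $M_h$ is received at most twice by each node, so the chain of $M_h$-sends has length at most $2|G|$, and the total number of sends is bounded.

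The hard part is the splicing step. Lemma 4.3 only asserts the \emph{existence} of the balancing anticlockwise sends, not their positions, and partial-send gives each node several consistent forwarding choices when it receives several messages simultaneously. Converting the existence statements into a coherent anticlockwise chain of $M_{h'}$-sends, and arranging the splicing so that the resulting walk has strictly smaller even length while still exhibiting a triple occurrence of some node, requires careful bookkeeping of parities and a case analysis on how the forced sends align round by round. This is analogous in spirit to the min-counterexample descent used in the proof of Theorem 2.2, but more subtle because multiple messages and staggered initial rounds enlarge the space of allowable behaviours, which is precisely why the ef-cycle conservation law of Lemma 4.3 is needed in place of the simple round-set argument.
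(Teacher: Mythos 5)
Your setup (reduce three occurrences to a pair at even gap, rule out gap $2$ via the partial-send rule, build an ef cycle, invoke Lemma 4.3) is fine as far as it goes, but there is a genuine gap precisely at the step you flag as the ``hard part,'' and the sketched repair would not work. Lemma 4.3 is only a per-round, per-parity \emph{count} identity over the whole cycle: the balancing anticlockwise sends have uncontrolled positions, they do not form a traceable chain (a fragment is annihilated whenever its receiving node also hears from the other direction, and new fragments are created at points receiving externally), and under partial-send the message carried by such a fragment can change at every hop. Consequently the ``spliced'' closed walk you propose would be a walk of sends of \emph{different} messages; it is not a succession of sent $M_{h'}$ messages for any single $h'$, so the minimality hypothesis --- which quantifies over single-message chains containing an even-gap repeat --- cannot be applied to it, and the descent does not close.

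The paper's proof avoids both obstacles with two ideas absent from your proposal. First, the ef cycle is not the chain segment between the two occurrences of $g$: the paper compares the actual flooding with the \emph{hypothetical} flooding of $M_h$ alone, where Theorem 2.2 guarantees the chain is blocked at some first index $n$; the blockage can only be caused by $M_h$ reaching $x^h_{i_h+n}$ from $x^h_{i_h+n+1}$ along a second path of complementary length, and the chain prefix together with that path forms the ef cycle, with the chain occupying exactly one half of it. Second, instead of tracing individual anticlockwise fragments, the paper runs a monotone counting induction on the shrinking windows $[j,2n-j]$, proving $|cl^s(j)| \leq |an^s(j)|$ with the known clockwise chain sends anchoring the window boundary at each step; at $j=n-1$ this forces the single send $c(n+1) \rightarrow c(n)$ in round $i_h+n$ in the \emph{actual} flooding, contradicting the chain's send $x^h_{i_h+n} \rightarrow x^h_{i_h+n+1}$ in round $i_h+n+1$. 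To salvage your descent on $2n$ you would need at minimum a class of mixed-message closed send-walks on which to perform it and a substitute for fragment-tracing; as written, the argument is incomplete.
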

\begin{proof}
Assume, on the contrary, that there is a sequence of sends (12) in which some node $g \in G$ occurs at least three times. Such a sequence of sends does not occur when the flooding of $M_h$ in round $i_h$ is the only flooding in $G$ as then, by Theorem 2.2, a message cannot be received more than twice by a node. {\it If only} $M_h$ {\it is flooded}, there is $n$ with $2 \leq n < k$ such that 
\begin{equation}
x^h_{i_h} \stackrel{M_h}{\rightarrow} x^h_{i_h+1} \stackrel{M_h}{\rightarrow} \ldots \stackrel{M_h}{\rightarrow} x^h_{i_h+n}
 \stackrel{M_h}{\not\rightarrow}x^h_{i_h+n+1},
\end{equation}i.e. the succession of sends in (12) cannot proceed beyond $n$ rounds, and therefore a sequence of sends via some nodes $y_{i_h+1}, \ldots , y_{i_h+n-2} \in G$
\begin{equation}
x^h_{i_h} \stackrel{M_h}{\rightarrow} y_{i_h+1} \stackrel{M_h}{\rightarrow} \ldots \stackrel{M_h}{\rightarrow} y_{i_h+n-2}
 \stackrel{M_h}{\rightarrow} x^h_{i_h+n+1}\stackrel{M_h}{\rightarrow} x^h_{i_h+n}
\end{equation}preventing $x^h_{i_h+n}$ sending to $x^h_{i_h+n+1}$ in (13). Define an ef cycle $c: C_{2n} \rightarrow G$ by:
\begin{equation}
c(i)= x^h_{i_h+i} \; (0 \leq i \leq n), \;\; c(n+1)= x^h_{i_h+n+1}, \;\; c(2n-l) = y_{i_h+l} \;(1 \leq l \leq n-2)
\end{equation}We now consider the messages sent clockwise and anticlockwise along edges of $c$ after round $i_h$ as per the partial-send {\it flooding of all messages} in the statement of this theorem. The following notation for subsets of $C_{2n}$ will be used where $1 \leq j < j' \leq 2n-1$:
\[
[j,j']= \{ i \in C_{2n} : j \leq i \leq j' \}, \;\;(j,j')= \{ i \in C_{2n} : j < i < j' \},
\]
\[[j,j')= \{ i \in C_{2n} : j \leq i < j' \},\;\;(j,j']= \{ i \in C_{2n} : j < i \leq j' \}, \;\; [0,2n]= C_{2n}.
\]
\begin{figure}[h!]
\centering 
\includegraphics[scale=0.75]{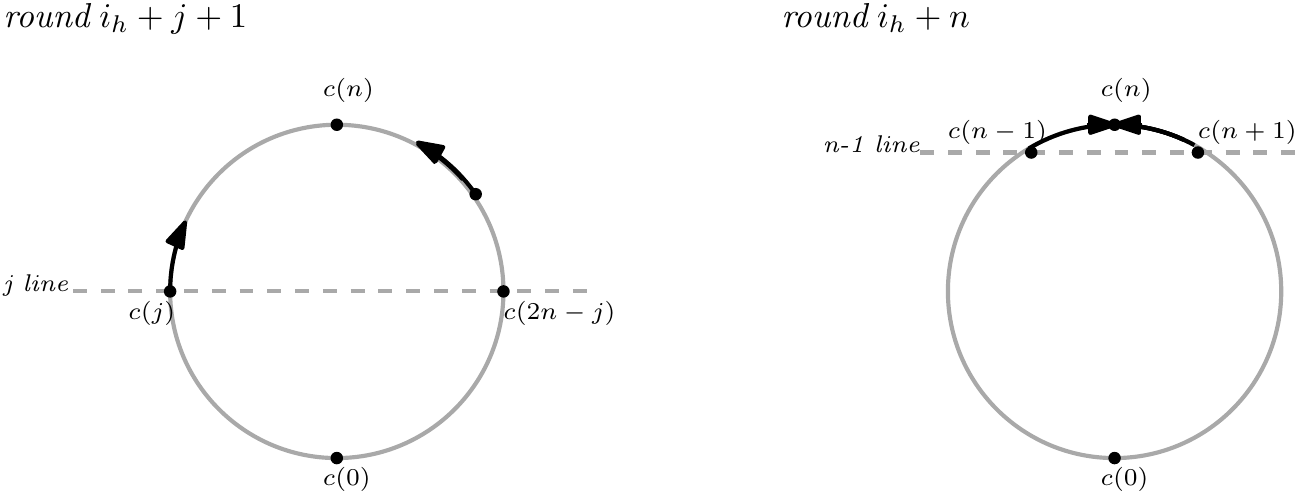} 
\caption{Proof of Theorem 4.4}
\end{figure}
\noindent We shall prove that for all $j$ with $0 \leq j \leq n-1$, the number of messages sent clockwise in round $i_h+j+1$ along $c$ above the $j$ line in the left-hand illustration in figure 4, from points of $c$ of equal parity to $j$, is less than or equal to the number of messages sent anticlockwise above the $j$ line, from points of equal parity to $j$. Precisely, if
\[
cl^s(j) = \{ i \; : \; c(i) \rightarrow c(i+1)\; in \; round \; i_h+j+1,\;\;  (i-j) \;even,\; \; i,i+1 \in [j,2n-j] \},
\]
\[
cl^r(j) = \{ i+1 \; : \; c(i) \rightarrow c(i+1)\; in \; round \; i_h+j+1,\;\;  (i-j) \;even, \;\; i,i+1 \in [j,2n-j] \},
\]
\[
an^s(j) = \{ i \; : \; c(i) \rightarrow c(i-1)\; in \; round \; i_h+j+1,\;\;  (i-j) \;even, \;\; i,i-1 \in [j,2n-j] \},
\]
\[
an^r(j) = \{ i-1 \; : \; c(i) \rightarrow c(i-1)\; in \; round \; i_h+j+1,\;\;  (i-j) \;even, \;\; i,i-1 \in [j,2n-j] \},
\]then
\begin{equation}
|cl^s(j)| (= |cl^r(j)|) \; \leq \;|an^s(j)| (= |an^r(j)|) \;\;\; (0 \leq j \leq n-1 ).
\end{equation}The proof is by induction. For $j=0$, by Lemma 4.3, 
\[
|cl^s(j)(=cl^s(even,i_h))| = |an^s(j) (= an^s(even,i_h))|.
\]Assume that (16) holds for some $j$ with $0 \leq j \leq n-2$. We show that (16) continues to hold for $j+1$. For $0\leq j \leq n-1$, define the set of points of $c$ above the $j$ line that are either initial nodes in round $i_h+j$ or receive messages from neighbours external to $c$:
\[
ext(j) = \{ i \; : \; c(i) \rightarrow c(i-1),c(i) \rightarrow c(i+1)\; in \; round \; i_h+j+1,\;\;  (i-j) \;even, \;\; i \in [j,2n-j] \}.
\]A clockwise message $c(i)\rightarrow c(i+1)$ is sent in round $i_h+j+2$ with points $i,i+1 \in [j+1,2n-(j+1)]$, i.e. $i \in [j+1,2n-(j+1))$, with $i$ of equal parity to $j+1$, in the following cases:
\begin{itemize}
\item[(a)]
$c(i-1) \rightarrow c(i)$ and $c(i+1) \not\rightarrow c(i)$ in round $i_h+j+1$. This corresponds to $i$ 
being in the set $(cl^r(j) \setminus an^r(j)) \cap [j+1, 2n-(j+1))$.
\item[(b)]
$c(i)$ is an initial node or received a message externally in round $i_h+j+1$. This means $i\in ext(j+1) \cap [j+1, 2n-(j+1))$. Note that, from (12) and (15), as $j \leq n-2 < k$, $c(j) = x^h_{i_h+j} \rightarrow x^h_{i_h+j+1}=c(j+1)$ in round $i_h+j+1$. Thus, $c(j+1) \not\rightarrow c(j)$ in round $i_h+j+2$ and so $j+1 \notin ext(j+1)$.  Hence, $i \in ext(j+1) \cap (j+1, 2n-(j+1))$.
\end{itemize}From (a) and (b),
\begin{equation}
cl^s(j+1)=(cl^r(j) \setminus an^r(j))\cap [j+1,2n-(j+1)) \cup ext(j+1) \cap (j+1,2n-(j+1)).
\end{equation}
An anticlockwise message $c(i)\rightarrow c(i-1)$ is sent in round $i_h+j+2$ with points $i,i-1 \in [j+1,2n-(j+1)]$, i.e. $i \in (j+1,2n-(j+1)]$, with $i$ of equal parity to $j+1$, in the following cases:
\begin{itemize}
\item[(a')]
$c(i+1) \rightarrow c(i)$ and $c(i-1) \not\rightarrow c(i)$ in round $i_h+j+1$. This corresponds to $i$ 
being in the set $(an^r(j) \setminus cl^r(j)) \cap (j+1, 2n-(j+1)]$. 
Note that, from (12) and (15), as $j \leq n-2 < k$, $c(j+1) = x^h_{i_h+j+1} \rightarrow x^h_{i_h+j+2}=c(j+2)$ in round $i_h+j+2$. Thus, $c(j+2) \not\rightarrow c(j+1)$ in round $i_h+j+1$ and so $j+1 \notin an^r(j)$. Hence, $i \in  (an^r(j) \setminus cl^r(j)) \cap (j+1, 2n-(j+1)] = (an^r(j) \setminus cl^r(j)) \cap [j+1, 2n-(j+1)]$.
\item[(b')]
$c(i)$ is an initial node or received a message externally in round $i_h+j+1$. This means $i\in ext(j+1) \cap (j+1, 2n-(j+1)]$. 
\end{itemize}
From (a') and (b'),
\begin{equation}
an^s(j+1)=(an^r(j) \setminus cl^r(j))\cap [j+1,2n-(j+1)] \cup ext(j+1) \cap (j+1,2n-(j+1)].
\end{equation}
Now, $ext(j+1) \cap cl^r(j) = \emptyset$ and $ext(j+1) \cap an^r(j) = \emptyset$ as points in $ext(j+1)$ send messages in both clockwise and anticlockwise directions along $c$ in round $i_h+j+2$ and therefore cannot have received messages from either direction in round $i_h+j+1$ and so cannot belong to either $cl^r(j)$ or $an^r(j)$. Therefore, from (17) and (18), to prove the inductive step $|cl^s(j+1)| \leq |an^s(j+1)|$, it suffices to prove that
\begin{equation}
|(cl^r(j) \setminus an^r(j))\cap [j+1,2n-(j+1)) | \leq |(an^r(j) \setminus cl^r(j))\cap [j+1,2n-(j+1)]|
\end{equation}By definition, $cl^r(j),an^r(j) \subseteq [j,2n-j]$ are points of $c$ of opposite parity to $j$. Thus, $cl^r(j),an^r(j) \subseteq [j+1,2n-(j+1)]$, $(cl^r(j) \setminus an^r(j))\cap [j+1,2n-(j+1)) = (cl^r(j)\setminus \{ 2n-(j+1) \}) \setminus an^r(j)$ and $(an^r(j) \setminus cl^r(j))\cap [j+1,2n-(j+1)] = an^r(j)\setminus cl^r(j)$, and so (19) follows as $| cl^r(j)| \leq | an^r(j)|$ by induction. Thus, (16) holds for all $j$ with $0 \leq j \leq n-1$.

Put $j= n-1$ in (16). Then, $cl^s(n-1) \subseteq [n-1, n+1 ]$, $c(n-1)=x^h_{i_h+(n-1)} \rightarrow x^h_{i_h+n}=c(n)$ in round $i_h+n$ by (15) and (12), and so $cl^s(n-1)=n-1$. As $|cl^s(n-1)| \leq |an^s(n-1)|$, $an^s(n-1)= n+1$. Thus, $c(n+1) \rightarrow c(n)$, i.e. (by (15)) $x^h_{i_h+n+1} \rightarrow x^h_{i_h+n}$ in round $i_h+n$. (See the right-hand illustration in figure 4.) This contradicts (12) in which $x^h_{i_h+n} \rightarrow x^h_{i_h+n+1}$ in round $i_h+n+1$. Thus, the assumption that (12) contains some node more than twice is incorrect.
\end{proof}


An example of multiple messages being flooded by a partial-send algorithm is the flooding of the details and price of an item that a node purchases. For example, a node receiving the details and prices of purchases of the item from its neighbour nodes might select the cheapest and pass on the information of the cheapest available item to neighbours from which it did not receive messages about the item in the previous round. There would be no reason to send a message to a neighbour from which details of a more expensive item had been received in the previous round as that neighbour would already have made the purchase - hence a partial-send flooding algorithm would be used.

The special case of Theorem 4.4, where the same message is flooded by all initial nodes, generalizes multi-source flooding of sections 2 and 3 to the case where sources may initiate floodings in different rounds.
\begin{corollary}
Flooding of a single message from multiple sources initiating in possibly different rounds always terminates.
\end{corollary}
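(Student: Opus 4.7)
The plan is to obtain this corollary directly from Theorem 4.4 by modelling the scenario as a special case of partial-send dynamic flooding in which every flooded message coincides with the single message $M$. Concretely, enumerate the (finitely many) sources together with their initiation rounds as pairs $(x^h_{i_h}, i_h)$ with $h \in \{0,1,\ldots,k-1\}$, and set $M_h := M$ for every $h$. Since the sources are nodes of the finite graph $G$, $k$ is finite, and the distinctness condition $h \neq h' \Rightarrow (x^h_{i_h},i_h) \neq (x^{h'}_{i_{h'}},i_{h'})$ of Definition 4.1 is inherited from the multi-source setup.

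Next I would verify that the single-message flooding dynamics are genuinely captured by the partial-send rule under this identification. If a node $g$ receives $M$ in round $i$, then $\bigcup_{h \geq 0} N_{g,i,h}$ is exactly the set of neighbours that sent $M$ to $g$ in round $i$, and for any $h'$ with $N_{g,i,h'} \neq \emptyset$ the message $M_{h'}$ is physically $M$. Consequently the partial-send prescription to send $M_{h'}$ to $N(g)\setminus\bigcup_{h}N_{g,i,h}$ collapses to sending $M$ to exactly those neighbours that did not send $M$ to $g$ in the previous round, which is the rule of the scenario. The choice of label $h'$ is immaterial at the physical level, so any consistent choice realises the single-message flooding as a legitimate partial-send execution in the sense of Definition 4.1.

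Finally, I invoke Theorem 4.4: no node receives $M_h$ more than twice, for each of the finitely many indices $h$. Hence each node receives $M$ (summing over all labels $h$) at most $2k$ times, the total number of messages in the network is bounded, and flooding terminates after a finite number of rounds. The only delicate point in the plan is the middle step — namely, checking that a consistent relabelling of physically identical copies of $M$ with source-indices $h$ really does produce an execution conforming to the formal partial-send algorithm; once that identification is granted the corollary is an immediate consequence of Theorem 4.4.
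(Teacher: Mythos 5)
Your proposal is correct and is essentially the paper's own argument: the paper states Corollary 4.5 without proof, treating it exactly as you do, namely as the special case of Theorem 4.4 in which every flooded message $M_h$ is the same physical message $M$, so the partial-send rule collapses to the single-message rule of Definition 1.1 and termination follows from the finiteness of the set of (source, initiation-round) pairs. Your extra care in checking that a consistent labelling of the identical copies of $M$ yields a legitimate partial-send execution is exactly the identification the paper leaves implicit.
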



\noindent Corollary 4.5 alongside the duality with `reverse' flooding can be used to show termination of algorithms such as leader election \cite{PelegLE-JPDC-90} which may have deviations from the strict flooding algorithm. For example,  a `sink' node, which receives messages from all neighbours in a certain round $i$, reverse floods a message to all neighbours in round $i+1$ as part of the algorithm.  This might occur when (forward) flooding has not completed in other parts of the graph. It is not clear that this abnormal behaviour at such a sink node is not out of step and cannot combine with other uncompleted flooding to cause non-termination. (Note that the node cannot be considered an initial node in round $i+1$ as it received messages in round $i$.) 
\begin{corollary}
Let $(G,E)$ be a graph which is flooded with a single message from round 1  as in Definition 1.1. Suppose that node $g \in G$ is a {\it sink} node in round $i$, i.e. $g' \rightarrow g$ for all neighbours $g'$ of $g$. Further, suppose that flooding proceeds according to the algorithm in Definition 1.1  except that in round $i+1$ $g$ sends to all of its neighbours. Then, the resulting flooding process terminates.
\end{corollary}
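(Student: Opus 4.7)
The plan is to simulate the modified process as a two-source, single-message amnesiac flooding of $M$ and then invoke Corollary 4.5. Concretely, keep the original sources $I$ initiating at round $0$ and adjoin $g$ as an additional source with initiation round $i$, so that $g$ sends $M$ along every edge of $N(g)$ in round $i+1$; this second source plays the role of the ``reverse'' wave that $g$ launches once the forward wave engulfs it.

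The core step is to verify, by induction on the round number $j$, that the set of $M$-carrying directed edges in round $j$ agrees in the modified process and in the two-source simulation. At every node $v \neq g$, and at $g$ in every round $j \neq i+1$, both processes apply the standard amnesiac rule of Definition 1.1, so the inductive hypothesis on reception at round $j$ propagates to equal send sets at round $j+1$. At $g$ in round $i+1$, the modified process sends $M$ along all of $N(g)$ by fiat, while the simulation does so because $g$ is an initial node with initiation round $i$; the two agree. Hence termination of the simulation, guaranteed by Corollary 4.5, yields termination of the modified process.

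The main obstacle is a technicality in Definition 4.1, which forbids a source from initiating in a round when it also receives a message --- something our sink $g$ does at round $i$. However, because both the incoming wave and the initiation concern the single message $M$, the prescribed send at $g$ in round $i+1$ is unambiguous: it is $M$ along $N(g)$, coinciding with the union of what the source rule and the standard-receive rule would each prescribe. The ef-cycle minimality argument of Theorem 4.4 underlying Corollary 4.5 is insensitive to this harmless overlap, so the reduction goes through. A short paragraph addressing this overlap, together with the inductive equivalence above, will complete the proof.
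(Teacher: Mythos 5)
Your reduction target is the right one --- the paper also ultimately rests on Corollary 4.5 --- but the step you flag as a ``technicality'' and then wave away is precisely the obstacle, and the paper itself points it out: ``the node cannot be considered an initial node in round $i+1$ as it received messages in round $i$.'' Definition 4.1 assumes an initiator does not receive a message in its initiation round, and this assumption is load-bearing in the proofs behind Corollary 4.5. In Lemma 4.3, a point $c(i)$ of an ef cycle is placed in $ext(\cdot,j)$ only under case (b), which requires $c(i-1)\not\rightarrow c(i)$ and $c(i+1)\not\rightarrow c(i)$ in round $j$; likewise in Theorem 4.4 the disjointness $ext(j+1)\cap cl^r(j)=ext(j+1)\cap an^r(j)=\emptyset$ is justified by the claim that a node sending in both directions ``cannot have received messages from either direction'' in the previous round. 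Your sink-turned-source $g$ violates exactly this: it receives from both cycle neighbours in round $i$ and sends to both in round $i+1$, so it falls outside the case analysis and the set identities (17)--(19) as written. It may be that the counting argument can be repaired to tolerate such a node, but that is the actual content of the corollary, and asserting that the argument is ``insensitive to this harmless overlap'' does not establish it.

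The paper avoids the problem entirely by a time-reversal (duality) argument on the sequence of flooding states $S_1,\ldots,S_\tau$ of the \emph{unmodified} process. Reversing the terminated sequence turns the sink $g$ into an initiator of the reversed flooding; deleting $g$'s initiation from the reversed sequence yields a legitimate multi-source, multi-round flooding to which Corollary 4.5 applies; reversing once more produces a valid flooding that reaches a state $inv(inv(S_i)^-)$ in which $g$ receives \emph{no} messages. Only at that point is $g$ adjoined as an initial node, so the hypothesis of Corollary 4.5 is genuinely satisfied and termination from $S_{i+1}^+$ follows. If you want to keep your direct simulation, you would need to re-prove Lemma 4.3 and Theorem 4.4 under the weaker hypothesis that an initiator may receive messages in its initiation round (checking in particular that the balance and interval arguments survive a node that both receives from and sends to all cycle neighbours); otherwise you should adopt some device, like the paper's reversal, that places $g$'s initiation in a round where it receives nothing.
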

\begin{proof}
The {\it state} of flooding $S_i$ in any round $i\geq 1$ which determines the subsequent states of flooding is defined by a relation making $G$ into a bidirected graph \cite{Edm03}  where edges may be undirected, directed one way or directed both ways as in our illustrations of rounds in figures:
\[
S_i \subseteq G \times G , \;\; S_i = \{ (g_1,g_2) \; : \; \{ g_1 ,g_2 \} \in E, g_1 \rightarrow g_2  \; in \; round \; i \}
\]Let the normal flooding process, where $g$ adheres strictly to the flooding algorithm in round $i+1$, terminate after round $\tau$ and let the sequence of states from round 1 be:
\begin{equation}
S_1, \ldots , S_i , S_{i+1} , \ldots , S_{\tau}.
\end{equation}We need to show that flooding terminates from the state $S_{i+1}^+$ given by:
\begin{equation}
S_{i+1}^+ = S_{i+1} \cup \{ (g,g') \; : \; g \in N_g \}
\end{equation}Consider the {\it reverse} flooding of (20), in which received messages $g_1 \rightarrow g_2$ become sent messages $g_2 \rightarrow g_1$ and sink nodes become initial nodes, i.e. given by the sequence of states:
\begin{equation}
inv(S_{\tau}), \ldots , inv(S_{i+1}) , inv(S_i), \ldots , inv(S_1)
\end{equation}where, for any state $S$, $inv(S)$ is the inverse relation $\{ (g_2,g_1) : (g_1,g_2) \in S \}$. The node $g$ is an initial node in $inv(S_{i})$ as $g$ is a sink node in state $S_{i}$ in forward flooding. Let
\begin{equation}
inv(S_i)^- = inv(S_i) \setminus \{ (g,g') \; : \; g' \in N_g \} 
\end{equation}The following
\[
inv(S_{\tau}), \ldots , inv(S_{i+1}) , inv(S_i)^-
\]are successive states of flooding as in (22) except that $g$ is not an initial node in $inv(S_i)^-$. By Corollary 4.5, this flooding terminates say after additional succeeding states $S_1' , \ldots , S_{\tau}'$:
\[
inv(S_{\tau}), \ldots , inv(S_{i+1}) , inv(S_i)^-, S_1' , \ldots , S_{\tau'}'
\]By duality, the sequence
\begin{equation}
inv(S_{\tau'}'), \ldots , inv(S_1' ), inv( inv(S_i)^-), S_{i+1}, \ldots , S_{\tau}
\end{equation}terminates at $S_{\tau}$. Now, by (23), $\{(g',g) : g' \in N_g \} \cap  inv( inv(S_i)^-) = \emptyset$ and so $g$ does not receive any messages in state $ inv( inv(S_i)^-)$. Thus, by Corollary 4.5, the flooding sequence of states obtained from (24), where $g$ is an initial node in state $ S_{i+1}^+$ as in (21),
\[
inv(S_{\tau'}'), \ldots , inv(S_1' ), inv( inv(S_i)^-), S_{i+1}^+ ,
\]will terminate from the state $ S_{i+1}^+$ (after some additional  succeeding states).
\end{proof}

\subsection{Multiple messages - ranked full-send flooding}
Here, messages $M_0, \ldots , M_h , \ldots $ are flooded as per Definition 4.1 conditions (i) and (iii). The rank of message $M_h$ is the integer $h$. Messages are ranked according to the round they were flooded, and messages flooded in the same round by different nodes also have different ranks, i.e.
\begin{equation}
h_1  \leq  h_2 \Rightarrow i_{h_1} \leq i_{h_2} \;\;\; (h_1,h_2 \geq 0)
\end{equation} Note that, unlike in partial-send flooding, a node $g$ may send a message to a node from which it has just received a message. However, the message sent will be different (of higher rank) to the one received (see figure 5). 
\begin{figure}[h!]
\centering 
\includegraphics[scale=0.75]{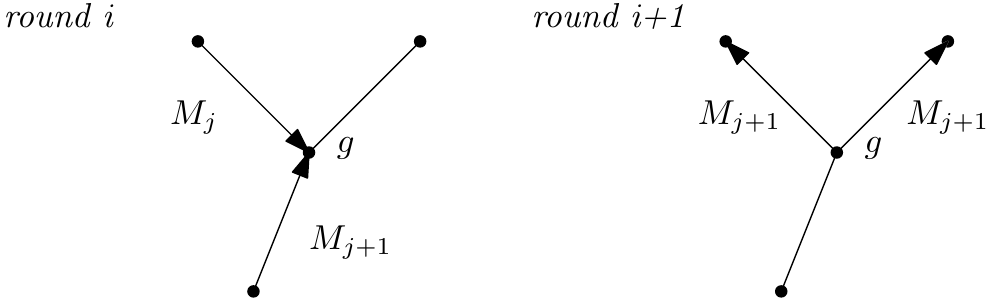} 
\caption{Ranked full-send flooding}
\end{figure}We will prove that a message cannot be received by a node more than twice in ranked full-send flooding by defining round-sets with respect to the different messages.
\begin{definition}Let $G$ be a graph flooded by ranked full-send flooding with initiating nodes and corresponding rounds denoted as in Definition 4.1. For each of these messages $M_h$, the {\it round-sets} $R_{i_h}^{M_h}, R_{i_h+1}^{M_h}, \ldots $ of $M_h$ are defined as:
\[
\begin{array}{ll}
R_{i_h}^{M_h}  & \hbox{\it is the set of initial nodes } x_{ i_{h'}} \; such \; that \; i_{h'}=i_h \; and\; h' \geq h,  \\
R_{i_h+j}^{M_h}  & \hbox{\it is the set of initial nodes or that receive message $M_{h'}$, where $ h' \geq h$,  in round $i_h+j$}  \; (j > 0). \\
\end{array}
\] 
\end{definition}\noindent So, round-sets are defined for each message $M_h$ after it is flooded. They comprise nodes that receive the message or a higher ranked message.
\begin{theorem}
Given any flooded message $M_h$ where $h \geq 0$, any node $g \in G$ receives $M_h$ at most twice. Thus, ranked full-send flooding terminates if the number of messages flooded is finite.
\end{theorem}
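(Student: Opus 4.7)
The plan is to extend the minimum-counterexample technique from the proof of Theorem 2.2 to the ranked full-send setting, working with the $M_h$-round-sets of Definition 4.6. Specifically, I aim to prove the stronger statement that no node $g \in G$ belongs to three distinct round-sets $R^{M_h}_{j_1}, R^{M_h}_{j_2}, R^{M_h}_{j_3}$. Since every receipt of $M_h$ by $g$ in round $j$ forces $g \in R^{M_h}_j$, this bound immediately yields the theorem, and termination when the number of messages is finite then follows as in the paragraph after Definition 2.3.

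First I would copy the framing of Theorem 2.2: define $\mathcal R$ to be the family of consecutive $M_h$-round-set sequences $R^{M_h}_s,\ldots,R^{M_h}_{s+d}$ with $R^{M_h}_s \cap R^{M_h}_{s+d} \neq \emptyset$ and $d > 0$, and $\mathcal R^e \subseteq \mathcal R$ the subfamily where $d$ is even. If some node lies in three such round-sets, at least one pair of indices has even difference, so $\mathcal R^e$ is non-empty; pick $\underline{R}^* \in \mathcal R^e$ of minimum even duration $\hat d$ and, among those, with earliest start $\hat s$. Fix $g \in R^{M_h}_{\hat s} \cap R^{M_h}_{\hat s + \hat d}$ and a neighbour $g'$ that delivers some $M_{h'}$ with $h' \geq h$ to $g$ in round $\hat s + \hat d$; then $g' \in R^{M_{h'}}_{\hat s + \hat d - 1} \subseteq R^{M_h}_{\hat s + \hat d - 1}$. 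Rule~(iii) of Definition 4.1 tells us that $g$ will send its own highest-ranked received (or initiated) message $M_{h_0}$, with $h_0 \geq h$, in round $\hat s + 1$ to every neighbour except those that sent $M_{h_0}$ to $g$ in round $\hat s$.

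The case analysis then mirrors Theorem 2.2. In Case (i), $g$ does not send to $g'$ in round $\hat s + 1$; by rule~(iii) this can only happen if $g'$ sent $M_{h_0}$ to $g$ in round $\hat s$, giving $g' \in R^{M_{h_0}}_{\hat s - 1} \subseteq R^{M_h}_{\hat s - 1}$, so the sequence $R^{M_h}_{\hat s - 1},\ldots,R^{M_h}_{\hat s + \hat d - 1}$ lies in $\mathcal R^e$ with the same duration $\hat d$ but earlier start, contradicting minimality of $\hat s$. In Case (ii), $g$ does send to $g'$ in round $\hat s + 1$, so $g' \in R^{M_h}_{\hat s + 1}$, and the shortened sequence $R^{M_h}_{\hat s + 1},\ldots,R^{M_h}_{\hat s + \hat d - 1}$ has even duration $\hat d - 2$, contradicting minimality of $\hat d$ whenever $\hat d \geq 4$.

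The main obstacle is the base case $\hat d = 2$ of Case (ii). In the single-message setting of Theorem 2.2 this sub-case was ruled out because $g$ and $g'$ cannot send to each other in consecutive rounds, but rule~(iii) legitimately permits $g$ to send $M_{h_0}$ to $g'$ in round $\hat s + 1$ and $g'$ to send back some $M_{h'}$ in round $\hat s + 2$, forcing $h' > h_0$ strictly (otherwise $g'$ would suppress the send by rule~(iii)). To close this gap I would exploit the third occurrence $g \in R^{M_h}_{j_3}$ guaranteed by the initial assumption, together with the strict-rank-increase principle for same-edge bounces in consecutive rounds. A careful split on whether $j_3 = \hat s + 1$, $j_3 < \hat s$, or $j_3 > \hat s + 2$, each time invoking the minimality of $(\hat d, \hat s)$ against newly produced shorter or earlier-starting sequences, should furnish a contradiction in every configuration. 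This $\hat d = 2$ analysis is where I expect the real technical work to lie; the remainder of the proof largely reprises Theorem 2.2.
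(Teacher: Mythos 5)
There is a genuine gap, and it sits upstream of where you located the ``real technical work'': your framework breaks before the $\hat d = 2$ analysis is reached. The stronger statement you set out to prove --- that no node lies in three distinct $M_h$-round-sets --- is false. Membership in $R^{M_h}_j$ only requires receiving \emph{some} message of rank at least $h$ (or initiating one), so on a path $a - b - c$ where $a$ floods $M_0$, $M_1$, $M_2$ in rounds $0,2,4$, the node $b$ lies in $R^{M_0}_1 \cap R^{M_0}_3 \cap R^{M_0}_5$ while receiving $M_0$ only once. Relatedly, your descent step ``fix a neighbour $g'$ that delivers some $M_{h'}$ with $h' \geq h$ to $g$ in round $\hat s + \hat d$'' can fail outright: a node can belong to $R^{M_h}_{\hat s+\hat d}$ purely by initiating a higher-ranked flooding in that round, with no neighbour delivering anything to it.

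The paper avoids both problems with an asymmetric admission condition that your $\mathcal R^e$ omits: a sequence $R^{M_h}_s,\ldots,R^{M_h}_{s+d}$ is admitted only if the witnessing node actually \emph{receives the specific message} $M_h$ in round $s+d$; the start round needs only round-set membership. The neighbour $g'$ is then chosen as one that sends $M_h$ \emph{itself} to $g$ in round $\hat s+\hat d$, which forces $g'$ to have received $M_h$ in round $\hat s+\hat d-1$ (since $M_h$ was initiated strictly earlier), and this is precisely what keeps the shifted and the shortened sequences inside $\mathcal R^e$. It also disposes of $\hat d=2$ cleanly, with no appeal to a third occurrence: in that case $g'$ receives in round $\hat s+1=\hat s+\hat d-1$ either $M_h$ from $g$, whereupon rule (iii) forbids $g'$ sending $M_h$ back to $g$, or a strictly higher-ranked message, whereupon rule (iii) forbids $g'$ sending $M_h$ at all in round $\hat s+\hat d$. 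Your outline reprises the skeleton of Theorem 2.2 correctly, but without the ``receives $M_h$ at the endpoint'' refinement the object of the minimal-counterexample argument is the wrong one and the proof cannot be completed.
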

\begin{proof}
Assume that a node $g$ receives a message $M_h$ more than twice. If $g$ receives $M_h$ three or more times, then there are evenly spaced rounds $s$ and $s+d$, where $d$ is even, in which $g$ receives $M_h$. By Definition 4.7,  $g$ is in the corresponding round-sets of $M_h$ for those rounds. So, if ${\mathcal R^e}$ is the set of finite sequences of consecutive round-sets satisfying:
\begin{equation}
\underline{R} = 
R_s^{M_h}, \ldots, R_{s+d}^{M_h} \;\;\; \hbox{ {\it where} $s\geq i_h, \;d>0$}, \; d \; even  
\end{equation}and
\begin{equation}
\hbox{{\it there exists} $g \in R_s^{M_h} \cap R_{s+d}^{M_h} $  {\it such that} $g$ {\it receives message} $M_h$ {\it in round} $s+d$},
\end{equation}then ${\mathcal R^e}$ is non-empty. (For the purposes of the proof, we only require that $g \in R_s^{M_h}$ in round $s$ and do not require that $g$ receives $M_h$ in that round.)  Define the subset ${\mathcal R}^e_{\hat{d}}$ of ${\mathcal R}^e$ of sequences of minimum (even) duration $\hat{d}$ and $\underline{R}^* $ in ${\mathcal R}^e_{\hat{d}}$ with earliest start point $\hat{s}$. By (27), there exists $g \in R_{\hat{s}}^{M_h} \cap R_{\hat{s}+\hat{d}}^{M_h}$ which receives message $M_h$ in round $\hat{s}+\hat{d}$. Choose node $g'$ which sends message $M_h$ to $g$ in round $\hat{s}+\hat{d}$. Observe that, as $g'$ sends $M_h$ in round $\hat{s}+\hat{d}$ and $i_h \leq \hat{s} < \hat{s}+\hat{d}-1$ and so $M_h$ was initially flooded earlier than round $\hat{s}+\hat{d}-1$, 
\begin{equation}
g' \; {\it receives}\; M_h \; {\it in} \; {\it round} \; \hat{s}+\hat{d}-1.
\end{equation}Now, either $g$ does not send a message to $g'$ in round $\hat{s}+1$ or $g$ does send a message to $g'$ in round $\hat{s}+1$. We show that each of these cases leads to a contradiction.  

$\;$

\noindent {\it Case (i) $g$ does not send a message to $g'$ in round $\hat{s}+1$}

\noindent As $g'$ is a neighbour of $g$, $g'$ sends a message to $g$ in round $\hat{s}$. Now, $g'$ does not send a lower ranked message than $M_h$ to $g$ in round $\hat{s}$ - otherwise $g$ which does receive a higher ranked message than $M_h$ in round $\hat{s}$ from some neighbour would, by the flooding algorithm in Definition 4.1(iii), send a higher ranked message to $g'$ in round $\hat{s}+1$ contradicting the case assumption that $g$ does not send any message to $g'$ in round $\hat{s}$. Thus, $g'$ sends a message $M_{h'}$, where $h' \geq h$, to $g$ in round $\hat{s}$ and so $g'$ is an initial node flooding $M_{h'}$ or it receives $M_{h'}$ in round $\hat{s}-1$. Hence, $\hat{s}-1 \geq i_{h'} \geq i_h$ by (25) as $h ' \geq h$. Therefore, by Definition 4.7, $g' \in R_{\hat{s}-1}^{M_h}$. Consider the sequence 
\begin{equation}
\underline{R}^{*'} = R_{\hat{s}-1}^{M_h},R_{\hat{s}}^{M_h},\ldots, R_{\hat{s}+\hat{d}-1}^{M_h} 
\end{equation}By (28), we have that $g'$ receives $M_h$ in round $\hat{s}+\hat{d}-1$ and so $ g' \in R_{\hat{s}-1}^{M_h} \cap R_{\hat{s}+\hat{d}-1}^{M_h}$. Also, $i_h \leq \hat{s}-1$. Thus, $\underline{R}^{*'}$ at (29) satisfies (26) and (27) and so $\underline{R}^{*'} \in {\mathcal R}^e$. Its duration $\hat{d}$  is the same as that of  $\underline{R}^*$ but its start point  $\hat{s}-1$  is earlier, contrary to our choice of $\underline{R}^* $ as having the earliest start point.

$\;$

\noindent {\it Case (ii) $g$ sends a message to $g'$ in round $\hat{s}+1$}

\noindent Here, as $g \in R_{\hat{s}}^{M_h}$, it follows that $g' \in R_{\hat{s}+1}^{M_h}$ and so $g' \in R_{\hat{s}+1}^{M_h} \cap R_{\hat{s}+\hat{d}-1}^{M_h}$. If $\hat{d}=2$, then $\hat{s}+1 = \hat{s}+\hat{d}-1$ and   
\[
\underline{R}^{*} =  R_{\hat{s}}^{M_h}, R_{\hat{s}+\hat{d}-1}^{M_h}, R_{\hat{s}+\hat{d}}^{M_h}
\]As $g' \in R_{\hat{s}+1}^{M_h} = R_{\hat{s}+\hat{d}-1}^{M_h}$, $g'$ receives $M_h$ or a newer message $M_h$, by Definition 4.1(iii), in round $\hat{s}+\hat{d}-1$. Also, $g'$ sends $M_h$ in round $\hat{s}+\hat{d}$. Thus, $g'$ cannot  receive $M_h$ in round $\hat{s}+\hat{d}-1$ as that would contradict basic flooding. Neither can $g'$ receive a newer $M_{h'}$ in round $\hat{s}+\hat{d}-1$ as that would mean that $M_h$ would not be sent in round $\hat{s}+\hat{d}$ by  Definition 4.1(iii). It follows that $\hat{d}$ cannot be 2. So, we have the sequence
\[
\underline{R}^{*''} =  R_{\hat{s}+1}^{M_h},\ldots, R_{\hat{s}+\hat{d}-1}^{M_h}
\]As $g' \in R_{\hat{s}+1}^{M_h} \cap R_{\hat{s}+\hat{d}-1}^{M_h}$ and, by (28), $g'$ receives $M_h$ in round $\hat{s}+\hat{d}-1$, (26) and (27) are satisfied by $\underline{R}^{*''}$ and so $\underline{R}^{*''}\in {\mathcal R}^e$.  The duration of $\underline{R}^{*''}$,  $\hat{d} -2$, is smaller than the duration $\hat{d}$ of $\underline{R}^{*}$ which was chosen to have the smallest even duration. This is the required contradiction and completes the proof of the theorem.
\end{proof}


Ranked full-send flooding may be used where time-sensitive information is being rapidly disseminated from various sources. For example, the prices of stocks and shares. A received node may act on such timely information continuously in an unspecified way
and will send the newest information from the most reliable sources to all neighbours who did not forward that information. Note that the unrefined ranked full-send algorithm may result in a node receiving older information from a more distant mode later on. This may or may not present a problem depending on the way a node acts on received information.

Uses of the flooding algorithm for broadcasting will need to ensure that every message is received by any given node in some round. 
It is easy to see that (time-)ranked full-send flooding achieves this if the same single node floods the sequence of messages $M_0, \ldots , M_h, \ldots$ to broadcast a stream of messages. If a newly flooded message $M_{i_{h'}}$ is received by a node $g \in G$ in some round $i$ for the first time, this will be before any newer message is received by $g$, and therefore $g$ will send $M_{i_{h'}}$, in round $i+1$, to all neighbours of $g$ which did not send $M_{i_{h'}}$ to $g$ in round $i$. From this, it is clear that all neighbours of $g$ will have received $M_{i_{h'}}$ by round $i+1$. 
\begin{corollary}
If (in ranked full-send flooding) a single node floods messages $M_0, \ldots , M_h , \ldots $ with initial rounds $i_0 < \ldots < i_h < \ldots$ respectively, then every node receives every message and no node receives a given message more than twice.
\end{corollary}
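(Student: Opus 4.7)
The plan is to split the corollary into its two assertions. The bound that ``no node receives a given message more than twice'' is immediate on applying Theorem 4.8 to each $M_h$ separately. The substantive assertion is that every node receives every message, and for this I would prove the strictly stronger quantitative claim: writing $x$ for the unique source and $d=d(x,g)$, \emph{every node $g$ first receives $M_h$ in round exactly $i_h+d$, for every $h\geq 0$}. Connectedness of $G$ then forces every node to receive every message.

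I would prove the quantitative claim by induction on $d$. The source trivially knows each $M_h$ it initiates; for $d=1$, Definition 4.1(i) gives that $x$ sends $M_h$ to every neighbour in round $i_h+1$. For the inductive step, assume the claim for all nodes at distance $\leq d$ and let $g$ be at distance $d+1$. Pick a neighbour $g'$ of $g$ on a shortest $x$-to-$g$ path, so $d(x,g')=d$. By the inductive hypothesis, $g'$ first receives $M_h$ in round $i_h+d$, and for every $h''>h$ first receives $M_{h''}$ in round $i_{h''}+d>i_h+d$, using the strict ordering $i_0<i_1<\cdots$ together with the ranking convention recalled at the start of Section~4.2. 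Hence in round $i_h+d$ the highest-ranked message seen by $g'$ is precisely $M_h$, so by Definition 4.1(iii) $g'$ forwards $M_h$ in round $i_h+d+1$ to every neighbour that did not send $M_h$ to it in round $i_h+d$. A trivial distance lower bound forbids $g$ from having seen $M_h$ before round $i_h+d+1$, so $g$ did not send $M_h$ to $g'$ in round $i_h+d$; therefore $g'$ does send $M_h$ to $g$ in round $i_h+d+1$, and this is necessarily $g$'s first receipt.

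The step I expect to be the main obstacle is the one that invokes Definition 4.1(iii). Had some strictly higher-ranked $M_{h''}$ reached $g'$ by round $i_h+d$, the ranked full-send rule would make $g'$ forward $M_{h''}$ rather than $M_h$, so $M_h$ could be choked off along this edge and fail to reach $g$ along this particular path. What rescues the argument is the combination of the single-source hypothesis and the strict initiation ordering: together they force each message's propagation wave to lag strictly behind the waves of all lower-ranked messages at every node, and this monotonicity, carried through the inductive hypothesis from one layer to the next, is the heart of the proof. Everything else is bookkeeping.
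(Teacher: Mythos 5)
Your proof is correct and takes essentially the same route as the paper: the paper's (informal) justification in the paragraph preceding the corollary rests on exactly your key observation, namely that a node's first receipt of $M_h$ occurs before its first receipt of any higher-ranked message, so the rule of Definition 4.1(iii) never suppresses $M_h$ on its first pass, and all neighbours of a node holding $M_h$ have it one round later. Your induction on distance with the exact first-receipt round $i_h+d(x,g)$ simply makes rigorous what the paper asserts is ``easy to see''.
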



If different nodes flood different messages, it cannot be guaranteed that every node receives every message in either partial-send or full-send flooding. One solution is for different messages received by a node, to be sent to neighbours in different successive rounds. This would be needed if message size is restricted as in a CONGEST model \cite{pelegbook} of message passing. The flooding process from the point of view of an observer of the progress of a given flooding of a message $M_h$, would appear as delays of more than one round between when $M_h$ is received and sent by a node. Such delays could lead to non-termination of the flooding of $M_h$, as is discussed in subsections 5.1 and 5.2 below. Another solution, if message size is unrestricted as in the LOCAL model \cite{pelegbook} of message passing, is basic flooding of all messages as in sections 2 and 3, independently and in parallel. This can result in growing message sizes as more and more messages of infinitely many would be flooded. A better solution is a parallel flooding version of ranked full-send flooding where a node which receives both older and some newer messages initially flooded by the same node, only sends the newer message. This reduces message size and all older and newer messages broadcast by any node are still received by every node. This follows by Corollary 4.9 applied to the observer of the sequence of floodings broadcast by a given node, which would effectively be ranked full-send flooding emanating from that node. It also means that, for the infinite sequences of messages flooded over time by a set $B\subseteq G$ of broadcasting nodes, at most $|B|$ messages are sent along any link in any round, thus requiring only a fixed finite size of message for transmission.  Moreover, no record of previous rounds is kept. 

\subsection{Loss of edges or nodes during flooding}
In the termination results Theorems 2.2, 4.4 and 4.8, we assume that the graph is fixed for the duration of the flooding process. Flooding proceeds in rounds acting on a fixed graph $(G,E)$ in all rounds, where $G$ is a fixed set of nodes and $E$ a fixed set of edges. The proofs in Theorems 2.2, 4.4 and 4.8 are unaffected if we allow the possibility of edge loss as flooding progresses, i.e. flooding proceeds in rounds on respective graphs:
\begin{equation}
(G,E_0), (G,E_1), \ldots , (G, E_i) , (G, E_{i+1}), \ldots  \;\; {\it where} \;\; E_0 \subseteq E_1 \subseteq \ldots \subseteq  E_i \subseteq E_{i+1} \subseteq \ldots
\end{equation}Allowing loss of nodes follows easily. Given a graph $(G,E_i)$ and a node $g \in G_i$, let $E_i^g$ be the edges in $E_i$ not incident with $g$. Then, $(G\setminus \{g \},E_i^g)$ is the subgraph of $(G,E_i)$ induced by the subset of nodes $G\setminus \{g \}$ and $(G, E_i^g)$ is the subgraph of  $(G,E_i)$ in which $g$ is isolated, i.e. with the same nodes as $(G,E_i)$ but with edges incident with $g$ removed. Clearly, given the sequence of graphs (30), flooding terminates on the sequence of graphs with monotonically decreasing sets of edges:
\[
(G,E_0), (G,E_1), \ldots , (G, E_i^g) , (G,E_{i+1}^g), \ldots
\]if and only if it terminates on the sequence of graphs where $g$ becomes isolated: 
\[
(G,E_0), (G,E_1), \ldots , (G\setminus \{ g \}, E_i^g) , (G \setminus \{g \} ,E_{i+1}^g), \ldots
\]It follows, inductively, that flooding terminates on any sequence of graphs of the form:
\[
(G_0,E_0), (G_1,E_1), \ldots , (G_i, E_i) , (G_{i+1}, E_{i+1}), \ldots  \;\; {\it where} \;\; G_i \subseteq G_{i+1}, E_i \subseteq E_{i+1} \; \hbox{{\it for all i}} \geq 0.
\]

\section{Non-terminating cases}
In this section, cases of amnesiac floodings that do not terminate generally are given.

\subsection{Asynchronous flooding}

We consider firstly an asynchronous message passing model in which an adaptive adversary can cause flooding to be non-terminating. In the model, the computation still proceeds in global synchronous rounds but the adversary can decide the delay of message delivery on any link. A message cannot be lost and will eventually be delivered, but the adversary can decide in which round to deliver the message. The adaptive adversary can decide on individual link delays for a round, based on the state of the network for the present and previous rounds (i.e. node states, messages in transit and message history).  Otherwise, the flooding algorithm is the same as for the synchronous case. We show that the adversary can force asynchronous flooding to be non-terminating by adaptively choosing link delays. Consider the triangle graph in figure 6. In round 0, $b$ has a message $M$ as the source node which is then flooded. Figure 6 illustrates rounds 1-5. Both $a$ and $c$ send $M$ to each other in round 2. In round 3,  $a$ sends $M$ to $b$ but the adversary makes $c$ hold the message for one round, indicated  by a circled node. After that, in round 4, $b$ and $c$ send $M$ to each other. In any given round, the corresponding bidirected graph determines in which ways flooding can proceed. As the bidirected graphs in rounds 2 and 4 are equivalent (by a relation- and adversary-preserving isomorphism), flooding can proceed from round 4 in similar ways as from round 2 with the edges $b$ and $a$ interchanged and the adversary acting on node $c$ as before. The repeated adversarial intervention can prevent flooding from terminating.
\begin{figure}[h!]
\centering 
\includegraphics[scale=0.75]{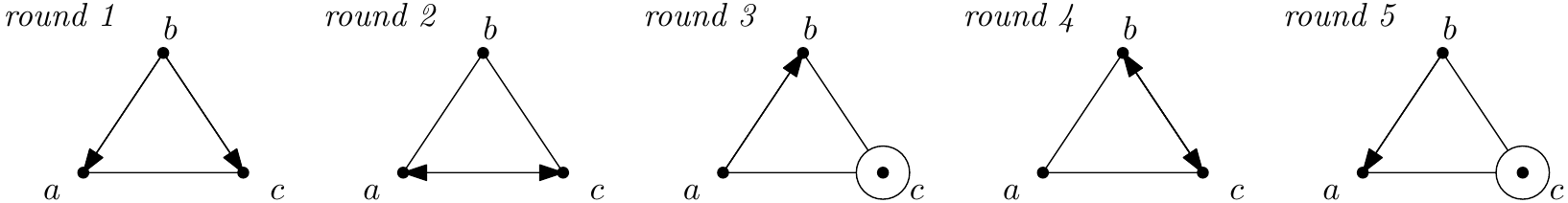} 
\caption{Non-termination of asynchronous flooding}
\end{figure}

\subsection{Fixed delays at edges}
In the asynchronous example in figure 6 above, the delay was caused by an adversary at node $c$. The duration of message transmission along the edges $\{ b, c \}$ and $\{ a , c \}$ depended on whether the adversary $c$ was receiving or sending the message. If the delay along any edge is fixed and the same in both directions on the triangle graph, then flooding terminates. This raises the question of whether flooding terminates on any graph if delays along edges are the same in both directions. Consider the example in figure 7 below. We have a weighted graph with edges labelled by positive integers representing the message transit time in either direction.
\begin{figure}[h!]
\centering 
\includegraphics[scale=0.75]{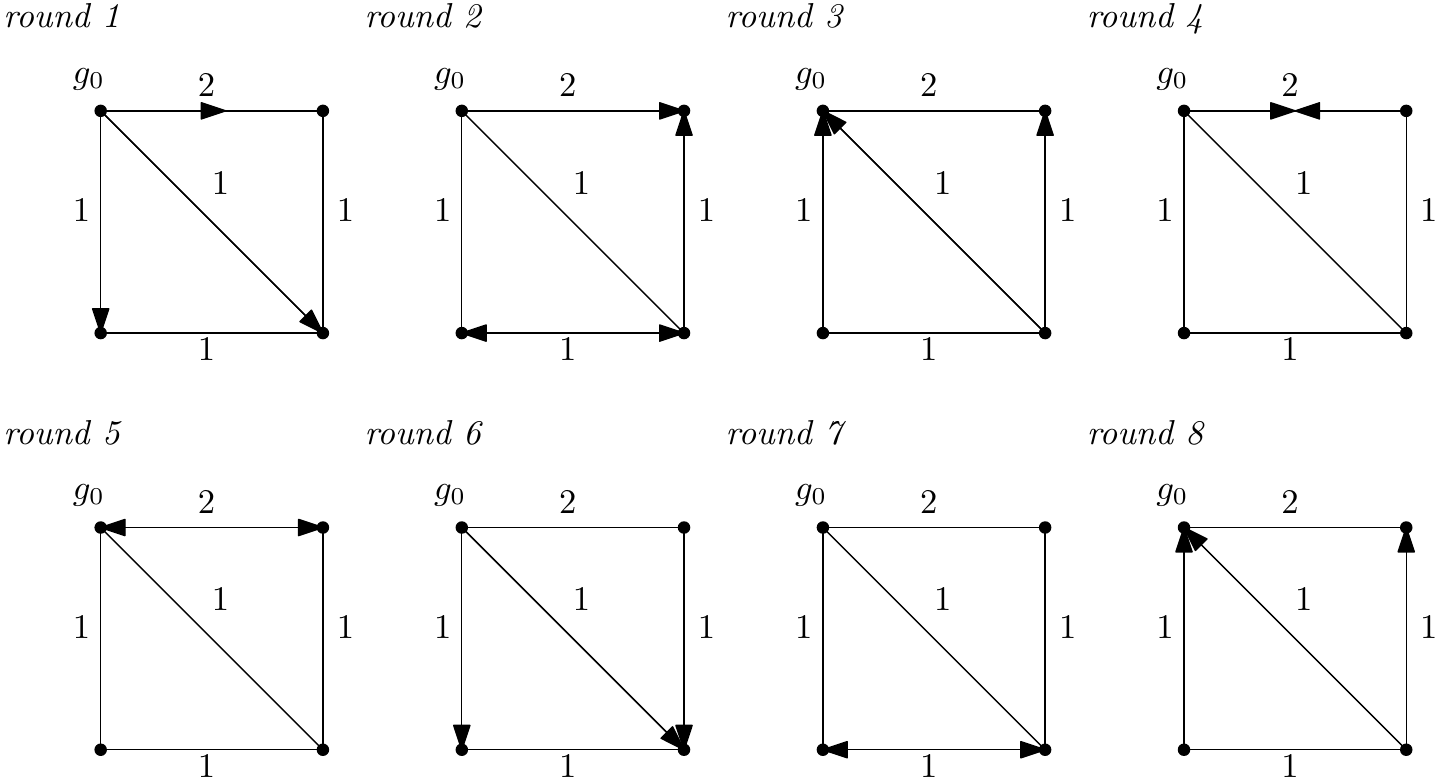} 
\caption{Non-termination of flooding with fixed edge delays}
\end{figure}Flooding proceeds with respect to a global clock, the non-negative integer ticks of which define rounds. Flooding is initiated in round 1 when node $g_0$ sends to all neighbours. Message transit time is given in terms of the number of ticks of the clock taken. Arrowheads at nodes indicate a message being received at a node. Arrowheads not at nodes (as in round 4 of figure 7) indicate messages in transit. The arrowheaded graph in a round determines how flooding proceeds. As the arrowheaded graphs in rounds 3 and 8 are identical, flooding does not terminate.

\subsection{Multiple messages - unranked full-send flooding}
If a flooding algorithm for multiple messages chooses to send a received message to all neighbours from which it did not receive that message, and another node receiving the same messages chooses a different message then that can lead to non-termination. In figure 8, $M_0$ is flooded by $x_0$ in round 1 and $M_1$ by $x_1$ in round 2. Nodes $g$ and $g'$ differ in the message they choose to forward in round 3. As a result the very same messages will be sent in round 9 and non-termination will ensue.
\begin{figure}[h!]
\centering 
\includegraphics[scale=0.75]{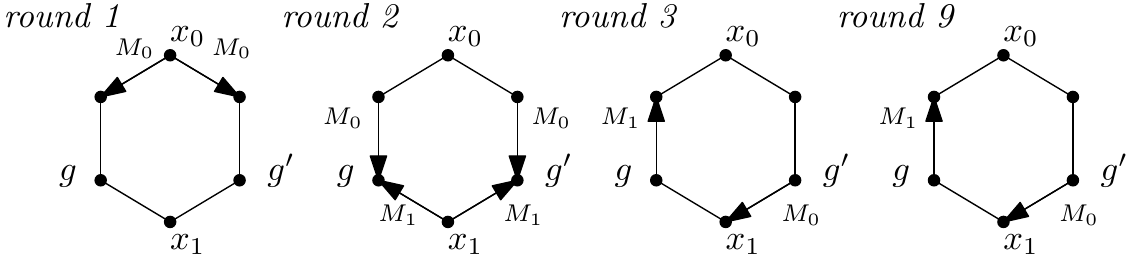} 
\caption{Non-termination of unranked full-send flooding}
\end{figure}


\subsection{Addition of edges or nodes during flooding}
In the simple example in figure 9 below, node $b$ initiates flooding in round 1. Edge $\{ b , c \}$ is added in round 2. The message is sent to from $c$ to $b$ along that edge in round 3 and circulates in the graph forever. In the case of graphs that are odd cycles, if ranked full-send flooding is continuous, a later higher-ranked flooded message will eventually eliminate such lower-ranked rogue messages. However, if the graph has even cycles, it is possible for such a lower-ranked message to pass higher-ranked messages and evade elimination infinitely often.
\begin{figure}[h!]
\centering 
\includegraphics[scale=0.75]{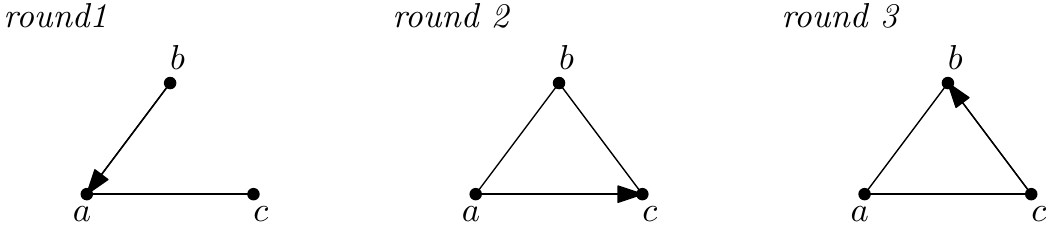} 
\caption{Non-termination of flooding with addition of edges}
\end{figure}


\section{Conclusion}
In this paper we have shown that synchronous flooding terminates on any finite graph and that no node receives the message more than twice. From this we derived sharp bounds for termination times. We have extended to dynamic settings of potential use for applications such as broadcast and leader election algorithms, with multiple initiation times and messages.  Our proofs showed that sequences of flooded messages originating from a given source could not survive longer in the presence of other dynamically flooded messages than if flooded alone from that source, and so termination and time to termination results carry over from the basic non-dynamic case. For ranked full-send flooding, we assumed that message ranking respected the order of time of initiation. This has a simpler proof. It is possible that some applications may rank messages otherwise. In fact, Theorem 4.8 can be proved without this assumption in a similar way to Theorem 4.4 using ef cycles. 

Multiple-message flooding could be extended to the case where a given node floods different messages to different neighbours simultaneously when initiating a flooding. This would compound the problem of delivering every message to every node, if broadcast is the objective. This problem can be solved in network topologies which have multiple edge-disjoint Hamilton cycles - common examples include complete graphs,  n-cubes \cite{OkS00}, tori \cite{bb03}, star networks \cite{cada}, \cite{DeH18} and transposition graphs \cite{Hus16} -  or in graphs that may not be hamiltonian but may have edge-disjoint circuits \cite{eul92} which visit every node in which nodes but not edges may be repeated. Multiple messages are sent in a given direction along different edge-disjoint circuits. Messages are forwarded by a node along the circuit on which they arrived. This guarantees that all messages sent by all nodes are received. However, the broadcast of a message takes at least as many rounds as there are vertices in the graph, unlike flooding which takes no more than about twice the diameter number of rounds. To improve the time taken to broadcast messages, when there are fewer such messages circulating in the network, a receiving node could forward a message received by a circuit edge further along that circuit, but also send the message to other unused edges in a `flooding' fashion to hasten the broadcast. Such a flooding algorithm would be amnesiac, as the sending of messages would be determined only by the received messages in the previous round. Variations of such algorithms on different topologies can be simulated on a small scale and properties compared using temporal logic model checkers \cite{raed}.

Another possible area of investigation is the design of  `functional' self-healing algorithms~\cite{CDT2018-CompactFTZ,CastanedaLT-Compact-ICDCN2020}   such that an ongoing flooding will still terminate despite nodes being adversarially deleted or inserted with judicious insertion of edges and maintaining of other invariants such as connectivity and expansion\cite{PanduranganRT16-DEX,PanduranganXhealT14,FG-DCJournal2012,Amitabh-2010-PhdThesis,HayesPODC08,SaiaTrehanIPDPS08}.






\end{document}